\newcommand{\ccF}{{\mathscr F}}
\newcommand{\Ind}{{\mathds 1}}
\newcommand{\ind}[1]{\Ind_{\{#1\}}}
\newcommand{\R}{\mathbb{R}}
\newcommand{\C}{\mathbb{C}}
\newtheorem{theorem}{Theorem}[section]
\newtheorem{corollary}[theorem]{Corollary}      %[section]
\newtheorem{lemma}[theorem]{Lemma}              %[section]
\newtheorem{proposition}[theorem]{Proposition}  %[section]
\theoremstyle{definition}
\newtheorem{remark}{Remark}[section]
\newtheorem{assumption}{Assumption}[section]
\definecolor{tscolor}{rgb}{1.0,0.6,0.0}
\renewcommand{\i}{{\rm i}}
\def\@setthanks{\vspace{-\baselineskip}\def\thanks##1{\@par##1\@addpunct.}\thankses}
\begin{document}

\title[]{{ \Large Variable Annuities in a L\'evy-based hybrid model with surrender risk}}

\author[]{Laura Ballotta$^{(\lowercase{a})}$, Ernst Eberlein$^{(\lowercase{b}),(\lowercase{c}),(\lowercase{d})}$, Thorsten Schmidt$^{(\lowercase{b}),(\lowercase{c}),(\lowercase{d}),\ast}$, Raghid Zeineddine$^{(\lowercase{d})}$}

\thanks{$^{(\lowercase{a})}$ Faculty of Finance, Cass Business School, City, University of London, UK}

\thanks{$^{(\lowercase{b})}$ Freiburg Institute for Advanced Studies (FRIAS), Germany}
\thanks{$^{(\lowercase{c})}$ Department of Mathematical Stochastics, University of Freiburg, Germany}

\thanks{$^{(\lowercase{d})}$ University of Strasbourg Institute for Advanced Study (USIAS), France}

\thanks{$^{\ast}$Corresponding Author: Thorsten Schmidt; email: thorsten.schmidt@stochastik.uni-freiburg.de}

\date{\today}

\maketitle

\begin{abstract}
This paper proposes a market consistent valuation framework for variable annuities with guaranteed minimum accumulation benefit, death benefit and surrender benefit features. The setup is based on a hybrid model for the financial market and uses time-inhomogeneous L\'evy processes as risk drivers. Further, we allow for dependence between financial and surrender risks. Our model leads to explicit analytical formulas for the quantities of interest, and practical and efficient numerical procedures for the evaluation of these formulas. We illustrate the tractability of this approach by means of a detailed sensitivity analysis of the price of the variable annuity and its components with respect to the model parameters. The results highlight the role played by the surrender behaviour and the importance of its appropriate modelling.
\\
\textbf{Keywords}: Finance; Variable Annuities; Hybrid models; L\'evy processes; Surrender Risk
\\
\textbf{JEL Classification}: G13, G12, G22, C63
\end{abstract}

%\linenumbers

\section{Introduction}

Variables Annuities (VAs) are unit-linked investment policies providing a post-retirement income, which is generated by the returns on a suitably managed financial portfolio. Various guarantees are applied with the aim of providing protection of the policyholders' saving accounts. VAs are popular insurance products in the US, Japan, the UK, and are increasingly present in the other European markets as well. According to the Life Insurance and Market Research Association (LIMRA) Secure Retirement Institute and the Insured Retirement Institute (IRI), VAs sales for 2018 in the US were more than \$100 billion - a 2\% increase compared to 2017.

Common types of guarantees offered by variable annuity contracts are the so-called Guaranteed Minimum Accumulation Benefit (GMAB), the Guaranteed Minimum Death Benefit (henceforth DB) - which applies in case of early death - the Guaranteed Minimum Income Benefit and the Guaranteed Minimum Withdrawal Benefit. The former two offer protection during the accumulation period, i.e. up to the expiration of the contract, whilst the latter two provide payouts after expiration, during the so-called `annuitization' period. For an extensive overview and classification of these products we refer to \cite{Bacinello2011} and references therein.

Due to the construction of these contracts, the underwriting insurance companies are exposed to financial and mortality risk, as well as surrender risk originated by the policyholder behaviour. Indeed, the option to leave the contract prior to maturity is a common additional feature of insurance contracts, which might cause significant cash outflows for the insurer, and negatively impact on the market growth of VAs (LIMRA Secure Retirement Institute).

The study of the pricing of life insurance contracts in presence of financial risk has been pioneered by \cite{Brennan1976} and \cite{Boyle1977}; an extensive literature has developed since, from the seminal contributions of \cite{albizzatigeman,Bacinello1996} and \cite{grosenjorgensen}, to the more recent ones of \cite{Bacinello2011,Deelstra2013,Giacinto2014} and \cite{Gudkov2018}, to mention a few.
These contributions distinguish themselves in terms of the specific product under consideration, although they are all based on diffusion-driven market models. Extensions to financial dynamics driven by L\'evy processes are considered in \cite{Ballotta2005,Ballotta2009}, although in these papers the focus is primarily on equity markets as the rate of interest is assumed constant and the possibility of surrender is not considered.

In light of the above, the aim of this paper is to provide a realistic framework for the modelling of these risks and the market consistent pricing of variable annuity contracts. We focus specifically on the pricing of the GMAB, the DB and the quantification of the surrender risk by means of the so-called Surrender Benefit (SB), contributing to the current state of the literature in a number of ways. Firstly, contrary to the literature mentioned above, we propose a general joint model for financial and insurance risks, which is driven by time-inhomogeneous L\'evy processes. Our choice is motivated by the increased distributional flexibility offered by these stochastic processes in portraying observed market trends \citep[see, for example,][for an extensive empirical analysis of equity markets]{Eberlein1995}. 

In more details, for the financial market we adopt the hybrid construction of \cite{ER} in which the stochastic dependence between interest rate markets and stock markets is taken into account explicitly. This feature is an important aspect in this context given the typically long maturity of insurance contracts. 

Further, we subdivide insurance risk into surrender risk and mortality risk. For the surrender risk, we follow market practice by considering two components: one capturing the baseline surrender behaviour due to non-economic factors and personal contingencies, and the second one which instead is responsive to changes in the conditions of the financial markets \citep[see][for example]{Kolkiewicz2006,Courtois2011,reacfin16}. This component in particular includes a function of the spread between the rate of the contract and the rate offered on the market for equivalent products, and therefore incorporates the stochastic inputs from the financial market model. The function of choice is designed as to suitably accommodate for surrender triggering arguments based on the moneyness hypothesis \citep[see][for example]{knolleretal}, the interest rate hypothesis and the emergency fund hypothesis \citep[see, for example,][and references therein]{NOLTE201712}. 

For mortality risk, we adopt an extended Gompertz-Makeham model with stochastic mortality improvement ratio as in \cite{KrayzlerZagstBrunner12} and \cite{Escobarea2016}; however we make the natural assumption of stochastic independence between the demographic and the financial risks.

Secondly, through the proposed general framework we obtain closed analytical formulas (up to a multidimensional integral) for the price of the guarantees and the SB. The dimensionality of these integrals is dictated by the frequency with which the policyholder is allowed to terminate the contract.

Finally, we also develop a practical and efficient numerical scheme for the evaluation of these formulas by means of Monte Carlo integration with importance sampling, and we illustrate its applicability by performing a sensitivity analysis of the contract's value with respect to the model parameters. The results from this analysis underline the importance of a correct calibration of the model to the observed surrender rates.

The paper is organized as follows. In section \ref{sec:market} we introduce the model for the financial market; the contract features of the VAs are introduced in section \ref{sec:VA} together with the additional modelling assumptions regarding mortality and surrender risk. The closed analytical pricing formulas are derived in section \ref{sec:pricing}; the corresponding numerical scheme is offered in section \ref{sec:numerics}. We present the results of the sensitivity analysis in section \ref{sec:sensitivity}, whilst section \ref{Conclusion} concludes.

Additional material, including some of the proofs, is provided in the appendix.

\section{The interest rate and equity market}\label{sec:market}
The aim of this section is to introduce a joint model for the interest rate and the equity markets used to develop the pricing framework for variable annuities. Specifically, we follow \cite{ER} and build this joint model on time-inhomogenous L\'evy processes. Thus we first offer some necessary preliminary results with the pricing application in view.

\subsection{Time-inhomogenous L\'evy process and pricing.}
Given a finite time horizon $T^*>0$, consider a stochastic basis $(\Omega, \ccF, \mathbb{F}, Q)$ with a filtration $\mathbb{F}=(\ccF_t)_{t\in [0,T^*]}$ satisfying the usual conditions.  Due to our focus on pricing, the probability measure $Q$ represents  a risk-neutral martingale measure.
Let $L^1$ and $L^2$ be two independent time-inhomogeneous L\'evy processes, i.e. continuous-time processes with independent increments, and with characteristic function
\begin{align}\label{eq:FourierTransformL}
  E_Q\Big[e^{iuL^j_t}\Big]= \exp \bigg(\int_0^t\Big(iub_s^j -\frac12c_s^ju^2 
  + \int_{\mathbb{R}}\big(e^{iux}-1-iux\ind{|x|\leq 1}\big)F^j_s(dx)\Big)ds\bigg)
\end{align}
for $j=1,2$; the local characteristic $(b_s^j, c_s^j, F_s^j)_{s\in [0, T^*]}$ satisfies
the integrability condition
\[
\int_0^{T^*}\bigg(|b_s^j| + c_s^j+ \int_{\mathbb{R}}(\min{\{|x|^2, 1\}})F_s^j(dx)\bigg)ds < \infty.
\]

We note that a financial market built on (exponential) L\'evy processes is in general incomplete, and consequently the risk neutral martingale measure is not unique. Standard practice in this case is to fix the pricing measure via calibration, using market quotes for traded derivatives contracts written on the quantities of interest, i.e. bonds and stocks in the specific case of our application. As model calibration is not in the scope of this paper, we refer to \cite{ER} for a detailed illustration of the calibration procedure for the adopted joint model. Nevertheless, for any risk neutral martingale measure to be well defined, we require that the exponential moments of a certain order exist. To this purpose, the following assumption holds throughout the rest of this paper.

\begin{assumption}[Exponential moments]
\label{ass:EM}
For $j=1,2$, there exist positive constants $M_j$ and $\epsilon_j$ such that for each $u \in [-(1+\epsilon_j)M_j, (1+\epsilon_j)M_j]$ 
\[
\int_0^{T^*}\int_{\{|x|> 1\}}e^{ux}\, F^j_s(dx)ds < \infty .
\]
\end{assumption}

All standard processes used in mathematical finance such as hyperbolic, normal Inverse Gaussian, Variance Gamma and CGMY processes satisfy the above condition.
Assumption \ref{ass:EM} implies the existence of the first moment of the process; this allows us to rewrite \eqref{eq:FourierTransformL} as 
\begin{align}
  E_Q\Big[e^{iuL^j_t}\Big]= \exp \bigg(\int_0^t\Big(iub_s^j -\frac12c_s^ju^2 
  + \int_{\mathbb{R}}\big(e^{iux}-1-iux\big)F^j_s(dx)\Big)ds\bigg). \notag
\end{align}
In this setting, we define the cumulant function  of $L^j$
\[
\theta^j_s(z) = b_s^jz + \frac12c_s^jz^2 + \int_{\mathbb{R}}(e^{zx} -1 -zx)F_s^j(dx),
\]
for any $z\in \mathbb{C}$ such that $Re(z) \in [-(1+\epsilon_j)M_j, (1+\epsilon_j)M_j]$. 
Then, $E_Q[\exp(zL^j_t)]< \infty$ and 
\[
	E_Q\big[\exp(zL^j_t)\big] = \exp\bigg(\int_0^t \theta^j_s(z) ds \bigg).
\]
Further, let $f:\mathbb{R}^+ \to \mathbb{C}$ be a continuous function with $|Re(f)| \leq M_j$, then
\begin{equation}\label{cumulant}
	E_Q\bigg[\exp\bigg(\int_0^tf(s)dL^j_s\bigg)\bigg] = \exp\bigg(\int_0^t \theta^j_s(f(s)) ds \bigg),
\end{equation}
where the integrals are defined component-wise for the real and imaginary part \citep[see][for full details]{EberleinRaible99}.

\subsection{The fixed income market.}
For the modelling of the fixed income market we follow the approach in \cite{EJR} \citep[see also][]{EberleinRaible99}, so that the starting point is the definition of the dynamics of the instantaneous forward rates $(f(t,T)_{0 \le t \le T \le T^*})$.  Let us assume that 
  \begin{equation}
    f(t,T) = f(0,T) + \int_0^t \alpha(s,T)ds - \int_0^t \sigma_1(s,T)dL^1_s, \quad 0 \le t \le T \le T^*, \label{forward rate}
  \end{equation}
for a deterministic and bounded function $f(0,T)$. 
The drift function $\alpha(\cdot)$ and the volatility function $\sigma_1(\cdot)$ are assumed to satisfy the usual conditions of measurability and boundedness \cite[see][(2.5)]{EJR}. 
The price  of a zero coupon bond at time $t$ with maturity $T\ge t$ is 
\[
   B(t,T)= \exp\bigg(-\int_t^T f(t,s)ds\bigg).
\]
Let us denote
\begin{align*}
A(s,T)&:= \int_s^T \alpha(s,u)du,  \quad  \Sigma(s,T):= \int_s^T \sigma_1(s,u)du.
\end{align*}
From Fubini's theorem and equation \eqref{forward rate} it follows that the dynamics of the bond price is
\[
B(t,T)= B(0,T)\exp\bigg(\int_0^t(f(s,s) -A(s,T))ds + \int_0^t \Sigma(s,T)dL^1_s\bigg),
\]

We remind that the short rate $r_t$ is implicitly given by the forward rate dynamics in equation \eqref{forward rate} by setting $T=t$. 
Finally, we assume that 
\begin{equation}
\Sigma(s,T) \leq \frac{M_1}{3}, \label{AssM}
\end{equation}
where $M_1$ is the constant from Assumption \ref{ass:EM}; this guarantees that the exponential of the stochastic integral has finite expectation. 

For the market to be arbitrage free, we require that $(B_t^{-1}B(t,T))_{0 \le t \le T}$, with $B_t = \exp\big(\int_0^tr(s)ds\big)$, is a martingale; it follows from \eqref{cumulant} that  $Q$  is  a martingale measure if
\begin{equation}\label{dc}
   A(s,T)= \theta^1_s(\Sigma(s,T)), \: \: \: 0 \le s \le T.
\end{equation}
In the following we will always assume that the drift condition \eqref{dc} holds. 

\subsection{The stock market}
For the modelling of the equity market, we consider the case of a single asset, be it a single stock or a stock index. More general settings can be obtained in a straightforward manner. 

It is well known from a number of empirical studies that equity and fixed income markets influence each other; this interaction is of particular importance in the context of long-dated insurance contracts.  Thus, following \cite{ER} we choose an approach which allows for stochastic dependence between the two markets. Consequently, we model the price process of the asset as
  \begin{equation}\label{eq:stock}
      S_t=S_0\exp\Big(\int_0^t r(s)ds + \int_0^t \sigma_2(s)dL^2_s + \int_0^t\beta(s)dL^1_s -\omega(t)\Big). 
  \end{equation}

In this hybrid approach, the driver of the interest rate dynamics affects explicitly the stock price in reason of $\beta(\cdot)$ through the term $\int_0^t\beta(s)dL^1_s$.  Further dependence is originated endogenously by the (integrated) short rate, i.e. the classical risk neutral term $\int_0^t r(s)ds$. $\sigma_2(\cdot)$ is a positive function and denotes the volatility of the stock price. Both, $\sigma_2(\cdot)$ and $\beta(\cdot)$, could be chosen as random processes, but having numerical aspects in mind, in the following we consider deterministic functions $\sigma_2(\cdot)$ and $\beta(\cdot)$. 
To ensure the existence of exponential moments we require 
\begin{eqnarray} \label{assM12}
    \sigma_2(s) \leq \frac{M_2}{2}, & &
    |\beta(s)| \leq \frac{M_1}{3},  
\end{eqnarray}
with $M_1, M_2$ the constants from Assumption \ref{ass:EM}. The drift term $\omega(t) $ in \eqref{eq:stock} is chosen such that the discounted stock price $(B_t^{-1}S_t)_{t \in [0,T^*]}$ is a $Q$-martingale. Equation \eqref{cumulant} and the independence of the two driving processes $L^1$ and $L^2$ imply that
  \begin{align}\label{dcII}
  \omega(t)= \int_0^t [\theta^2_s(\sigma_2(s)) +\theta^1_s(\beta(s))]ds.
  \end{align}
Hence, under \eqref{dc} and \eqref{dcII} the joint market model for bonds and security $S$ is free of arbitrage.

\section{The variable annuity contract}\label{sec:VA}

A variable annuity (VA) is an insurance contract which gives the holder a variety of benefits depending on the notional $I$ in exchange for an initial premium which we will determine in the following. This premium is paid once at inception of the VA contract. The maturity $T$ of the contract is 
assumed to satisfy $0<T\le T^*$. In the specification considered here, the VA includes three features: a Guaranteed Minimum Accumulation Benefit (GMAB), a Surrender Benefit (SB), and  a Death Benefit (DB).
 
In details, at maturity $T$ the GMAB pays to the policyholder 
\[ \max( I S_T, G(T)), \]
with $G(T) = I \exp(\delta T)$, $\delta>0$. In other words, the GMAB offers the best of the investment of an amount $I$ in either the asset $S$ or in a risk-free account with guaranteed rate $\delta$. Note that, in order to simplify the notation, in the following we assume that the price process of the asset is normalized so that $ S_0 = 1. $
This payoff, however, can only be claimed if the policyholder is still alive at time $T$ and did not exercise the surrender option before. 
  
In case of early surrender, 
the right of refund is restricted to the current value of the fund reduced by a compulsory surrender penalty. 
Let the penalty $P:[0,T] \to (0,1]$ be an increasing function with $P(T)=1$, and define ${\bf t}:=(t_0,t_1, \ldots, t_K)^\top$ with $0=t_0<t_1< \ldots <t_K< T$. We assume that premature surrender is possible at any time point $t_i\in {\bf t}$ with $i = 1, \ldots,K-1$, in which case the policyholder would receive the amount 
\[    IS_{t_i}P(t_i), \]
 Small values of the penalty $P$ at early dates would allow the insurer to recover any expenses related to the writing of the contract even in case of early surrender. The penalty could also serve to the insurer to hedge against a significant rise in the fund value.

Finally, in case of death before maturity $T$, the death benefit pays (to the beneficiaries) 
\[ \max(IS_{\bar{t}_i}, G(\bar{t}_i)),\]  
      for $\bar t_i\in \bar{\bf t}$, $i = 1,\ldots,N$, and   
      $\bar{\bf t}:=(\bar t_0,\bar{t}_1, \ldots, \bar{t}_N)^\top$ with $0=\bar t_0 < \bar{t}_1< \ldots < \bar{t}_N=T$.   The time points $\bar{t_i}$ denote the frequency with which mortality is monitored by the insurer over the lifetime of the contract.

In principle, the time scales ${\bf t}$ and $\bar {\bf t}$ could be arbitrary, but in typical cases they are not: we assume that ${\bf t}\subset \bar {\bf t}$ in the sense that any surrender time $t_i$ is contained in $\{\bar t_1, \dots,\bar t_N\}$. This is a very natural assumption since death of the policyholder might be monitored by the insurer at the end of every month or every quarter, whereas  surrender of the contract might be allowed only at the end of each year during the life of the contract, or at the policy anniversary.

The description of the contract benefits given above highlights the need for an accurate financial model, but also an appropriate modelling for mortality risk and surrender risk. This is offered in the rest of this section, which we conclude with the discussion of the market consistent valuation of the VA.
 
 \subsection{Mortality model}\label{sec:mortality}
 For the modelling of mortality risk,  we follow standard literature in the field and adopt a stochastic intensity-based approach. Pioneered by \cite{Milevsky2001}, further developed by \cite{Dahl2004, Dahl2006}, and more recently generalized by \cite{Li2011}, this framework builds on a given initial curve for mortality rates by superimposing a stochastic process capturing random improvements and fluctuations.

In details. Let  $\tau^m(x)$ be a random time capturing the remaining lifetime of a $x$ years old individual. The corresponding survival probability with respect to the given probability measure is 
\[Q\left(\tau^m(x)>t\right)=\mathbb{E}_{Q}\left(e^{-\int_{0}^{t}\lambda_{u}^{m}(x+u)du}\right),\]
where $\lambda_{t}^m(x+t)$, $t>0$, is the stochastic mortality intensity for an individual aged $x+t$ at time $t$. This intensity process is modelled as 
\begin{align}\label{eq:lambdam}
\lambda^m_t(x+t)= \lambda^{m,0}(x+t)  \xi_t,   
\end{align}
for an initial curve of the mortality intensity, $\lambda^{m,0}$, and a strictly positive process $\xi_t$ such that $\xi_{0}=1$, capturing the mortality improvements from time 0 to time $t$ for a person aged $x+t$. Finally, the mortality intensity satisfies the property that $\lambda^m_0(x)= \lambda^{m,0}(x)$ (see \citealp{Dahl2006} and references therein for full details).

A popular choice for the initial mortality curve is represented by the Gompertz-Makeham model \citep[][for example]{Dahl2004,Dahl2006}. An alternative specification corresponding to the structure for the UK standard tables for annuitant and pension population is adopted in \cite{Ballotta2006195}. Numerous choices for the process of the mortality improvement ratio, $\xi_t$, have been put forward as well: \cite{Dahl2004,BIFFIS2005443,Dahl2006} for example focus on affine diffusion specifications with particular emphasis on time-inhomogeneous CIR processes; \cite{Ballotta2006195} instead extend the generalized linear model by superimposing a standard mean reverting Ornstein-Uhlenbeck process and suitably accommodating longevity effects.

In light of the above, in the following we choose the standard  Gompertz-Makeham model for the initial curve, so that
\[\lambda^{m,0}(x+t)=\frac{1}{b}e^{\left(\frac{x+t-z}{b}\right)},\] 
and an extended  Ornstein-Uhlenbeck with mean reversion level $e^{-\lambda t}$ for the mortality improvement ratio as in \cite{KrayzlerZagstBrunner12}, i.e.
\[ d\xi(t)=\kappa(\exp(-\lambda t) - \xi(t))dt + \sigma dW_t. \]
We assume that $W$ is a Brownian motion independent of $(L^1, L^2)$, $z$, $\kappa$ and $\sigma$ are non-negative, $b$ is positive and $\lambda \in \mathbb{R}$. This implies that the mortality intensity is independent of the financial market.% and therefore the related filtration $\mathbb{F}=(\ccF_t)_{t\in[0,T^{\ast}]}$. 

%Let us denote by 
%$\mathbf{F}^{W,L^1,L^2} = (\ccF^{W,L^1,L^2}_t)_{0 \le t \le T^*}$
%the (augmented) filtration generated by $W$, $L^1$ and $L^2$. 
%In the given framework, it follows that for all $t \in [0,T^*]$ and all $A \in \ccF_{t}^{W,L^1,L^2}$
%\begin{eqnarray*}
%	E_{ Q}\Big[  \ind{ \tau^m(x) >t } \Ind_A \Big] = 
%	E_{Q}\Big[  e^{-\int_0^t \lambda^m_u(x+u) du}\Ind_A \Big].
%\end{eqnarray*}
Let us denote by $\mathbf{F}^{L^1,L^2} = (\ccF^{L^1,L^2}_t)_{0 \le t \le T^*}$
the filtration generated by $L^1$ and $L^2$. Then, for any set $A\in\ccF^{L^1,L^2}_t$, it follows that
\begin{align}
E_{ Q}\Big[  \ind{ \tau^m(x) >t } \Ind_A \Big] = Q(A) E_{ Q}\Big[  e^{-\int_0^t \lambda^m_u(x+u) du} \Big]. \label{mortality indep}
\end{align}

A few considerations are in order. Firstly, as $\xi$ is modelled as a Gaussian Ornstein-Uhlenbeck process, the mortality intensity can become negative with positive probability. However, this probability is negligible (less than $10^{-5}$) as shown in Appendix A.2. in \cite{Escobarea2016}. For the occurring difficulties with negative values in the intensity, see for example \cite{BieleckiRutkowski2002}, Lemma 9.1.4 and the related remarks.
Secondly, the above assumption of independence between the mortality risk driver, $W$, and the financial risk drivers, $L^1$ and $L^2$, is plausible from the perspective of the insurer: precise modelling would be highly client-specific, and would require information rarely accessible by  insurance companies. Moreover, the independence assumption implies a high degree of tractability which is important for complex products such as VAs considered here. 

Finally, as a consequence of the independence between demographic and financial risks, the computation of the survival probability is carried out under the risk neutral measure $Q$. For specific considerations about the interaction between mortality risk and market risk, we refer the interested reader to \cite{Dahl2006} and references therein.

\subsection{Surrender model}
  
Surrender risk is notoriously difficult to assess and model due to the nature of the decisions leading to it. Policyholders can indeed surrender both because of rising alternative financial opportunities, and apparently non rational (in the financial sense) behaviours due to personal considerations and contingencies. Nevertheless, surrender represents one of the main risks faced by life insurance companies due to the liquidity issues it can generate, with potential loss of market share (see, for example, \citealp{Loisel2011}, and references therein).

A common market and academic practice to surrender modelling \citep[see][for example]{Kolkiewicz2006,Courtois2011,reacfin16} is to consider two components: a deterministic baseline hazard rate function capturing lapses\footnote{`Lapse' was originally used to denote termination of an insurance policy and loss of coverage because the policyholder had failed to pay premia, whilst `surrender' denotes termination accompanied by the payout of a surrender benefit. Nowadays `lapse' often denotes both situations.} due to non-economic factors, and a stochastic process representing additional shocks to the baseline due to changes in the market. 

This random component is usually linked to the spread between the return offered by the contract and the one offered on the market for equivalent products. Indeed, dependence of the surrender on interest rates is relatively intuitive: higher interest rates are strong incentives for the policyholder to switch to higher yield investments, whilst very low interest rates - such as the ones currently observed in all major economies - could represent advantageous opportunities for refinancing. The random component should also be linked to the change in value of the underlying asset, as it directly impacts on the amount received in case of surrender.

Thus, following this line of reasoning, let $\tau^s$ denote the random time of the policyholder decision to surrender. As specified above, surrender is allowed at time points $t_i$, $i=1,\dots,K-1$; by convention  $\{\tau^s = \infty\}$ corresponds to no surrender. Let $\lambda^s$ denote the corresponding intensity of surrender; then $\lambda^s(t)=0$ for $t\in [0, t_1) \cup [t_K, T]$. Further, we model the baseline surrender behaviour due to non-economic factors and personal contingencies by a non-negative constant $C$. 

Let $D(t)$ be the process driving the dynamic lapse component; consistently with the considerations offered above, we build this process on the spread between the return offered by the surrender benefit (net of any penalty charge) plus the market rates at which this amount can be invested, and the total yield of the policy represented by the value at maturity of the guaranteed amount. Thus, let $Y_t= \log S_t$, and $p(t)=- \log P(t)$. Then
\begin{equation}
D(t)= Y_{t} -p(t) + \int_{t}^Tf(t,s)ds - \delta T, \quad 0 \leq  t \leq T. \label{Dt}
\end{equation}
The overall surrender intensity consequently is defined as 
\begin{equation}
\lambda^s(t)= \beta D^2(t_{i}) + C, \quad t_{i} \le t < t_{i+1}, \label{lambda}
\end{equation}
so that it is piecewise constant on the interval $[t_{i},t_{i+1})$ for $i=1, \dots, K-1$. 

The non-negative constant $\beta$ captures the dependence between the surrender intensity and the market and is a measure of the investors' rationality (in the pure economic sense), and their response to personal financial motivations. Equation \eqref{lambda} uses the square of the spread function $D(t)$ in order to capture both situations of favourable market conditions offering more remunerative investment opportunities, and financial market turmoils in which policyholders might lack sufficient resources to finance their expenses (emergency fund hypothesis).

Although in spirit similar to others in the literature (see for example \citealp{Courtois2011} and \citealp{Escobarea2016}), our construction distinguishes itself also for the non-Gaussian dynamics of the underpinning risk drivers $L^{1}$ and $L^{2}$.

The resulting probability of no surrender is given by 
\begin{align} \label{def:taus}
Q(\tau^s \geq t_i |\ccF_{t_i}^{L^1,L^2}) = \exp\Big( -\int_{0}^{t_{i}} \lambda^s_u du \Big) ,
\end{align} 
for all $1 \le  i \le K-1$, and 
\begin{eqnarray}
Q(\tau^s \geq t |\ccF_{t}^{L^1,L^2})&=& Q(\tau^s = \infty |\ccF_{t}^{L^1,L^2})\notag\\ &=& \exp\Big( -\int_{0}^{t_K} \lambda^s_u du \Big),\label{def2:taus}
\end{eqnarray} 
for $t_{K-1} < t$.
We note that the last integral equals $\exp(-\int_{0}^{T} \lambda^s_u du)$. The set-up chosen here can be obtained in a doubly-stochastic model or in a model where immersion holds, see \cite{AksamitJeanblanc} for a comprehensive treatment in this regard. 
An alternative form of the intensity is investigated in the Appendix \ref{sec:altsurrender}.

Finally, we observe that the intensity functions $\lambda^s$ and $\lambda^m$ are independent due to the assumed independence between demographic and financial risks.

\subsection{The price of the variable annuity}
  Using the notation introduced above, we can now formulate the actual cash-flows associated with the considered variable annuity. Firstly, recall that the {\rm GMAB} provides a payoff  only at the maturity time $T$ if the policyholder is alive (i.e., $\{\tau^m(x) > T\}$) and if there was no surrender until this time  (i.e., $\{\tau^s > T\}$). 
  Consequently, the associated cash-flow at maturity $T$ is 
  \begin{equation}\label{GMAB}
  {\rm GMAB}( T)=  \ind{\tau^m(x) >T} \ind{\tau^s >T}  \max(I S_T, G(T)).
  \end{equation}
  Secondly, the surrender option can be exercised only once if the policyholder is still alive (i.e., $\{\tau^s < \tau^m(x)\}$). Therefore, should surrender occur,  the surrender benefit at time $t_i$ pays 
  \begin{equation}\label{SB}
  {\rm SB}(t_i)=  \ind{\tau^s =t_i} \ind{\tau^s < \tau^m(x)} IS_{t_i}P(t_i),
  \end{equation}
  where $t_i$ is one of the possible premature surrender dates. Finally, the death benefit provides a payoff only  in case of no early surrender, and is quantified as 
  \begin{equation}\label{DB}
  {\rm DB}(\bar{t}_i)=  \ind{\bar{t}_{i-1} \leq \tau^m(x) < \bar{t}_i} \ind{\tau^m(x) < \tau^s} \max(I S(\bar{t}_i), G(\bar{t}_i)) ,
  \end{equation}
  where $\bar{t}_i$ is one of the possible payoff dates of the death benefit.

The price  ${\rm P}^{\rm VA}$ of the variable annuity at time $t=0$ is equal to the sum of the prices of its constituents, i.e.
  \begin{align}\label{price:VA}
  {\rm P}^{\rm VA}= {\rm P}^{\rm GMAB} + {\rm P}^{\rm SB} +  {\rm P}^{\rm DB},
  \end{align}
  with, by standard risk neutral valuation argument
  \begin{equation*}
  {\rm P}^{\rm GMAB} = E_{Q}\bigg[e^{-\int_0^T r(u)du} \,  {\rm GMAB}(T)\bigg], 
  \end{equation*}
  \begin{equation*}
  {\rm P}^{\rm SB} =\sum_{i=1}^{K-1}  E_{Q}\bigg[   e^{-\int_0^{t_{i}} r(u)du} {\rm SB}(t_{i})\bigg],
  \end{equation*}
  \begin{equation*}
  {\rm P}^{\rm DB} = \sum_{i=1}^N  E_{Q}\bigg[e^{-\int_0^{\bar t_i} r(u)du} {\rm DB}(\bar t_i)\bigg].
  \end{equation*}

  Tractable pricing formulae for these expressions are provided in the following section. 

  \section{Market-consistent valuation}\label{sec:pricing}
In this section we derive analytical expressions for the components of the variable annuity contract discussed above in the model setup provided in section \ref{sec:market} in presence of mortality and surrender risk. Useful results and representations needed in the following are provided in the  Appendix \ref{app:prelim}.

  \subsection{Guaranteed minimum accumulation benefit}
  The independence between $\tau^m(x)$ and the financial market and result \eqref{def2:taus} imply that
  \begin{align*}
  {\rm P}^{\rm GMAB}&= E_Q\bigg[ e^{-\int_0^T r(u)du}\ind{\tau^m(x) >T} \ind{\tau^s >T} \; \max(I S_T, G(T))\bigg]\\
  &= Q(\tau^m(x) >T) E_Q\bigg[ e^{-\int_0^T r(u)du}\max(I S_T, G(T)) E_Q\bigg(\ind{\tau^s >T}|\ccF^{L^1, L^2}_{T} \bigg) \bigg]\\
  %& = Q(\tau^m(x) >T) E_Q\bigg[ e^{-\int_0^T r(u)du}e^{-\int_0^{t_K}\lambda^s(u)du} \; \max(I S_T, G(T))\bigg].
  \end{align*}
  and therefore
  \begin{align*}
  {\rm P}^{\rm GMAB}&=  Q(\tau^m(x) >T) E_Q\bigg[ e^{-\int_0^T r(u)du}e^{-\int_0^{t_K}\lambda^s(u)du} \; \max(I S_T, G(T))\bigg].
  \end{align*}
  We introduce the $T$-forward measure $Q^T$ defined as 
  \begin{equation}
  \frac{dQ^T}{dQ}= \frac{1}{B(0,T)B(T)}. \label{forward measure}
  \end{equation}
  Denoting the expectation with respect to $Q^T$ by $E^T$, we obtain 
 \begin{eqnarray*}
  &&{\rm P}^{\rm GMAB}= Q(\tau^m(x) >T)B(0,T) E^T\bigg[ e^{-\int_0^{t_K}\lambda^s(u)du} \; \max(I S_T, G(T))\bigg].\\
  \end{eqnarray*} 
  Observe that 
  \begin{align}\label{eq: max IST and GT} 
  	\max(IS_T, G(T)) = G(T)\bigg[1+ \Big(\frac{IS_T}{G(T)}-1\Big)^+ \bigg],
  \end{align}
  consequently 
 \begin{eqnarray}
  \frac{{\rm P}^{\rm GMAB}}{Q(\tau^m(x) >T)B(0,T)G(T)}&=&  
  E^T\bigg[ e^{-\int_0^{t_K}\lambda^s(u)du}\bigg]
  + E^T\bigg[ e^{-\int_0^{t_K}\lambda^s(u)du} \; \Big(\frac{IS_T}{G(T)}-1\Big)^+ \bigg] \notag\\[2mm]
    &=:& A_1 + A_2,\label{A1A2}
  \end{eqnarray} 
 with obvious definitions in the last line. Thus, the term $A_1$ captures the cost faced by the insurance company of no early surrender; the term $A_2$, instead, is the cost of the option embedded in the GMAB, conditional on no early surrender. 
 
 Let
     
 \begin{align} \label{def:wk}
 	w_l & := \int_0^{t_{l}}A(s,T)ds +\int_0^T f(0,s)ds  - \delta T  -\omega(t_{l}) -p(t_{l})
 	\intertext{ for $l=1,\dots,K-1$ and }
 	w_K & := \int_0^{T}A(s,T)ds +\int_0^Tf(0,s)ds  - \delta T  -\omega(T). \label{def:wk2}
 \end{align}
 
  Further, define $\Delta t_l:=t_l -t_{l-1}$, as well as $R:=(0,\ldots,0,r)\in \mathbb{R}^K$, with $1<r<2$, and let for all $0 \le s \le T$,  and $u \in \R^{K-1}$, $v \in \C^K$
  \begin{eqnarray*}
  D(u,T) &:=& \exp\Big(\i\sum_{l=1}^{K-1} u_{l}w_l \Big), \notag\\
  \tilde D(v,T) &:=& D(v_1,\dots,v_{K-1},T) \exp\Big(\i v_{K}w_K\Big), \notag
  \end{eqnarray*}
  \begin{eqnarray}
  E(s,u,T)&:=& \Sigma(s,T)+ \i(\beta(s) - \Sigma(s,T)) \sum_{l=1}^{K-1}   u_{l} \ind{0\leq s \leq t_{l}}\label{E},\notag\\
  \tilde E(s,v,T)&:=& E(s,v_1,\dots,v_{K-1},T) + \i  (\beta(s) - \Sigma(s,T))v_K,
  \end{eqnarray}
  \begin{eqnarray*}
   F(s,u)&:=& \i \sigma_2(s)\sum_{l=1}^{K-1}  u_{l} \ind{0\leq s \leq t_{l}}, \notag\\
   \tilde F(s,v) &:=& F(s,v_1,\dots,v_{K-1}) + \i  \sigma_2(s)v_K , \notag
   \end{eqnarray*}
   \begin{eqnarray*}
   M(u,T) &:=& D(u,T)\; e^{ \int_0^T\theta^1_s(E(s,u,T))ds + \int_0^T\theta^2_s(F(s,u))ds }
    \; \prod_{l=2}^K \sqrt{\frac{\pi}{ \beta \Delta t_{l}}}e^{-u_{l-1}^2/(4 \beta \Delta t_{l})} , \notag \\
   N(v,T) &:=& \tilde D(v-\i R,T) \; e^{ \int_0^T\theta^1_s(\tilde E(s,v-\i R,T))ds + \int_0^T\theta^2_s(\tilde{F}(s,v-\i R))ds } \notag \\
   &&  \times \frac{1}{(\i v_{K} +r-1)(\i v_{K}+r)} 
\prod_{l=2}^K \sqrt{\frac{\pi}{\beta \Delta t_l}}e^{-v_{l-1}^2/(4 \beta \Delta t_l)} , \text{ for } v\in\mathbb{R}^K. \notag
  \end{eqnarray*}
  
The value of the GMAB is then given in the following.   
  \begin{theorem} \label{prop:A1}
  The price ${\rm P}^{\rm GMAB}$ is given by
  \begin{align*}
  	{\rm P}^{\rm GMAB} = Q(\tau^m(x) >T)B(0,T)G(T) (A_1 + A_2)
  \end{align*}
  with
  \begin{align*}
     A_1 &=\frac{e^{-C(t_K -t_1)}}{(2\pi)^{K-1 }}e^{-\int_0^T A(s,T)ds} 
     \int_{\mathbb{R}^{K-1}}M(u,T)du, \\
     A_2 &= \frac{e^{-C(t_K -t_1)}}{(2\pi)^K}e^{-\int_0^T A(s,T)ds} 
     \int_{\mathbb{R}^{K}} N(u,T) du,
   \end{align*}
   with $A(s,T)$ as in \eqref{dc}.
  \end{theorem}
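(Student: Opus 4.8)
The plan is to evaluate the two expectations $A_1$ and $A_2$ of \eqref{A1A2} separately, after making the surrender factor explicit. Since $\lambda^s$ is piecewise constant by \eqref{lambda} and vanishes outside $[t_1,t_K)$, I would first compute
\[
\int_0^{t_K}\lambda^s(u)\,du = C(t_K-t_1) + \beta\sum_{l=1}^{K-1}D^2(t_l)\,\Delta t_{l+1},
\]
which pulls the deterministic factor $e^{-C(t_K-t_1)}$ out of both terms and leaves a Gaussian-type weight $\exp(-\beta\sum_l D^2(t_l)\Delta t_{l+1})$. The crucial preliminary step is to rewrite the spread $D(t_l)$ of \eqref{Dt} as an affine functional of the drivers. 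Using $\int_{t_l}^T f(t_l,s)\,ds = -\log B(t_l,T)$ together with the bond representation and \eqref{eq:stock}, the stochastic short-rate integral $\int_0^{t_l} r(s)\,ds$ cancels and one obtains
\[
D(t_l) = w_l + \int_0^{t_l}(\beta(s)-\Sigma(s,T))\,dL^1_s + \int_0^{t_l}\sigma_2(s)\,dL^2_s,
\]
with $w_l$ exactly as in \eqref{def:wk}; the analogous computation at maturity gives $Y_T-\delta T = w_K + \int_0^T(\beta(s)-\Sigma(s,T))\,dL^1_s + \int_0^T\sigma_2(s)\,dL^2_s$ with $w_K$ as in \eqref{def:wk2}.

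For $A_1$ I would linearise each quadratic by the Gaussian (Hubbard--Stratonovich) identity
\[
e^{-a x^2} = \frac{1}{\sqrt{4\pi a}}\int_{\R}e^{-u^2/(4a)}\,e^{\i u x}\,du, \qquad a=\beta\Delta t_{l+1},
\]
introducing one real variable $u_l$ per surrender date $l=1,\dots,K-1$, and then exchange the $(K-1)$-fold integral with $E^T$ by Fubini. What remains is the joint characteristic function $E^T[\exp(\i\sum_l u_l D(t_l))]$. To evaluate it I would switch from $Q^T$ back to $Q$ via the density $\tfrac{dQ^T}{dQ} = \exp(-\int_0^T A(s,T)\,ds + \int_0^T\Sigma(s,T)\,dL^1_s)$, which follows from the bond dynamics and the drift condition \eqref{dc}; this produces the factor $e^{-\int_0^T A(s,T)ds}$ and adds $\Sigma(s,T)$ to the $L^1$-integrand. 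Collecting the $L^1$- and $L^2$-integrands then shows they coincide with $E(s,u,T)$ and $F(s,u)$, so by the independence of $L^1,L^2$ and the cumulant formula \eqref{cumulant} the characteristic function equals $e^{-\int_0^T A\,ds}\,D(u,T)\exp(\int_0^T\theta^1_s(E)\,ds + \int_0^T\theta^2_s(F)\,ds)$. Absorbing the Gaussian prefactors into $M(u,T)$ and checking that $\prod_{l=1}^{K-1}(4\pi\beta\Delta t_{l+1})^{-1/2} = (2\pi)^{-(K-1)}\prod_{l=2}^{K}\sqrt{\pi/(\beta\Delta t_l)}$ yields the stated $A_1$.

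For $A_2$ the only new ingredient is the option factor $(IS_T/G(T)-1)^+ = (e^{Y_T-\delta T}-1)^+$, which I would represent through its damped Fourier transform,
\[
(e^{y}-1)^+ = \frac{1}{2\pi}\int_{\R}e^{(\i v+r)y}\,\frac{dv}{(\i v+r-1)(\i v+r)}, \qquad 1<r<2.
\]
Evaluated at $y=Y_T-\delta T$ this introduces the final variable $v_K$ and the exponent $(\i v_K+r)$, which is precisely the shift encoded by $-\i R$. Proceeding exactly as for $A_1$ --- Hubbard--Stratonovich on the surrender quadratics, the change of measure, and \eqref{cumulant} --- the combined $L^1$- and $L^2$-integrands become $\tilde E(s,v-\i R,T)$ and $\tilde F(s,v-\i R)$, the deterministic part becomes $\tilde D(v-\i R,T)$, and the extra factors $\tfrac{1}{2\pi}$ and $(\i v_K+r-1)^{-1}(\i v_K+r)^{-1}$ assemble, together with the Gaussian weights, into $N(v,T)$. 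This gives $A_2$ with prefactor $(2\pi)^{-K}e^{-\int_0^T A\,ds}$.

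The main obstacle is not the bookkeeping but the analytic justification of the two interchanges of integration and of the finiteness of the cumulant integrals. Applying Fubini and evaluating $\theta^1_s,\theta^2_s$ at the complex arguments $\tilde E(s,v-\i R,T)$ and $\tilde F(s,v-\i R)$ is legitimate only if their real parts stay inside the strip of Assumption \ref{ass:EM}. Here the bounds \eqref{AssM} and \eqref{assM12} together with $1<r<2$ are exactly what is needed: since the purely imaginary increments do not affect the real part, one checks $|\re\tilde E| = |\Sigma(s,T)(1-r)+r\beta(s)| \le M_1$ and $\re\tilde F = r\sigma_2(s)\le M_2$, so both arguments remain admissible and the $\theta$-integrals converge. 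I would verify these estimates first, then invoke dominated convergence and Fubini to secure the interchanges, after which the identities above complete the proof.
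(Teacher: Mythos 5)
Your proposal is correct and takes essentially the same route as the paper's proof: the paper likewise reduces $A_1$ and $A_2$ to the extended characteristic function of $\bigl(D(t_1),\dots,D(t_{K-1}),D(T)\bigr)$ under $Q^T$ via the Gaussian Fourier transform of the surrender weights and the damped Fourier transform of the call payoff, changes measure back to $Q$ through $\tfrac{dQ^T}{dQ}=\exp\bigl(-\int_0^T A(s,T)\,ds+\int_0^T\Sigma(s,T)\,dL^1_s\bigr)$, and applies \eqref{cumulant} with exactly your moment checks $r\sigma_2(s)\le M_2$ and $|r\beta(s)+(1-r)\Sigma(s,T)|\le(2r-1)\tfrac{M_1}{3}\le M_1$. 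The only cosmetic difference is that the paper invokes Theorem 3.2 of \cite{EP} to justify the Parseval/Fubini interchange, whereas you derive the same identity inline via Fourier inversion and Fubini.
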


  \begin{proof}
  We start by calculating $A_1$ and $A_2$, then at the end of the proof we present an explicit formula for $Q(\tau^m(x) >T)$. By the definition of $\lambda^s$ in \eqref{lambda}, we have
\begin{eqnarray*}
 A_1 &=& e^{-C(t_K -t_1)} E^T\bigg[\prod_{i=2}^K e^{-\beta \Delta t_i D(t_{i-1})^2}\bigg]\\ 
 &=& e^{-C(t_K -t_1)} E^T\bigg[f(D(t_1), \ldots, D(t_{K-1}))\bigg],
 \end{eqnarray*}
where 
  $f(x_1, \ldots, x_{K-1})
              :=\prod_{l=2}^{K} \big( e^{-\beta \Delta t_{l} x_{l-1}^2}\big)
               =\prod_{l=2}^{K} f_{l}(x_{l-1}).$ 
For a generic function $f$ we denote by $\hat f$ its Fourier transform. Then, for any $y\in \mathbb{C}$, 
\begin{eqnarray}
\hat{f_l}(y)&=& \int_{\mathbb{R}}e^{\i yt}e^{-\beta \Delta t_l t^2}dt 
= \sqrt{\frac{\pi}{\beta \Delta t_l}}\exp\Big( -y^2/(4 \beta \Delta t_l)\Big). \label{Fourier 1}
\end{eqnarray}
This implies that
\begin{equation}\label{eq:fhat}
 \hat{f}(y_1, \ldots, y_{K-1})=  \prod_{l=2}^K \sqrt{\frac{\pi}{\beta \Delta t_{l}}}e^{-y_{l-1}^2/(4 \beta \Delta t_{l})},
\end{equation}
and we observe that $\hat{f}\in L^1(\mathbb{R}^{K-1})$. By Theorem 3.2 in \cite{EP}
  \begin{equation}
    E^T\bigg[f(D(t_1), \ldots, D(t_{K-1}))\bigg] = \frac{1}{(2\pi)^{K-1}}\int_{\mathbb{R}^{K-1}}\tilde M(\i u)\hat{f}(-u)du, \label{Price-Fourier}
  \end{equation}
  where for any $u=(u_1, \ldots, u_{K-1})$, $\tilde M(\i u)$ is defined as follows
  \begin{align*}
    \tilde M(\i u)&:=E^T\bigg[e^{\i u_1 D(t_1) + \ldots + \i u_{K-1}D(t_{K-1})}\bigg].
  \end{align*}
  Using representation \eqref{D:appendix} of $D$ given in the appendix together with equations \eqref{forward rate} and \eqref{Dt}, we obtain
  \begin{align*}
    \tilde M(\i u) &= \exp\bigg[\i\sum_{l=1}^{K-1} u_{l}\Big(-p(t_{l}) - \delta T+ \int_0^T f(0,s)ds + \int_0^{t_{l}}A(s,T)ds -\omega(t_{l})\Big)\bigg]\\
        & \quad \times E^T\bigg[\exp\Big(\i\sum_{l=1}^{K-1} u_l \Big(\int_0^{t_{l}} \sigma_2(s)dL^2_s + \int_0^{t_{l}}(\beta(s) - \Sigma(s,T))dL^1_s\Big)\Big)\bigg].
  \end{align*}
  Moreover, in virtue of representation \eqref{Bank ac}, the density of $Q^T$ given in \eqref{forward measure} can be written as
  \begin{align*} 
    \frac{dQ^T}{dQ}= \exp\Big( -\int_0^T A(s,T)ds + \int_0^T \Sigma(s,T)dL^1_s \Big).
  \end{align*}
  Consequently
  \begin{align*}
  \lefteqn{ E^T\bigg[\exp\Big( \i\sum_{l=1}^{K-1} (\int_0^{t_{l}} u_{l}\sigma_2(s)dL^2_s + \int_0^{t_{l}}u_{l}(\beta(s) - \Sigma(s,T))dL^1_s)\Big)\bigg]} \hspace{1cm}\\
  &= E_Q\bigg[\exp\Big( \int_0^T\Sigma(s,T))dL^1_s - \int_0^TA(s,T)ds \Big) \\
  & \qquad \quad \times \exp\Big( \i\sum_{l=1}^{K-1} \Big(\int_0^{t_{l}} u_{l}\sigma_2(s)dL^2_s + \int_0^{t_{l}}u_{l}(\beta(s) - \Sigma(s,T))dL^1_s\Big)\Big)\bigg]\\
  &= e^{-\int_0^T A(s,T)ds}E_Q\bigg[\exp\Big( \int_0^T E(s,u,T)dL^1_s + \int_0^T F(s,u)dL^2_s\Big)\bigg]\\
  &= \exp\Big( -\int_0^T A(s,T)ds + \int_0^T\theta^1_s(E(s,u,T))ds + \int_0^T\theta^2_s(F(s,u))ds \Big),
  \end{align*}
  with $E$ and $F$ as in \eqref{E}. The last equality follows from the independence of $L^1$ and $L^2$ and equation \eqref{cumulant}. Therefore, with $D(u,T)$ defined in  \eqref{E} we have
  \begin{eqnarray*}
  \tilde M(\i u)&=& D(u,T)  \exp\Big(-\int_0^T A(s,T)ds + \int_0^T\theta^1_s(E(s,u,T))ds + \int_0^T\theta^2_s(F(s,u))ds \Big)
  \end{eqnarray*}
  and the representation for $A_1$ follows from \eqref{Price-Fourier}.

We continue by calculating 
    $$ A_2 =E^T\bigg[ e^{-\int_0^{t_K}\lambda^s(u)du}  \Big(\frac{IS_T}{G(T)}-1\Big)^+ \bigg]. $$
With the  definitions  $S_T=\exp(Y_T)$,  $G(T) = I \exp(\delta T)$, and \eqref{D:appendix}, we obtain that
\begin{eqnarray*}
 \frac{IS_T}{G(T)} 
 &=& \exp(D_T ).
 \end{eqnarray*}
Therefore, by the same arguments that have been used in the calculation of $A_1$, we can prove that 

\begin{align*}
 A_2 &= e^{-C(t_K -t_1)} E^T\bigg[f(D(t_1), \ldots, D(t_{K-1}))\Big(e^{D(T)}-1\Big)^+\bigg]\\
     &= e^{-C(t_K -t_1)} E^T\bigg[F(D(t_1), \ldots, D(t_{K-1}), D(T))\bigg],
\end{align*}
for
$$ F(x_1, \ldots, x_K):=f(x_1,\dots,x_{K-1})  (e^{x_K }-1)^+ .$$ 
To ensure integrability, we define $g(x_1,\ldots, x_K):=F(x_1, \ldots, x_K) e^{-rx_K}$, with $1<r<2$. Further, let
$$ g_K (x_K) :=  (e^{x_K }-1)^+ e^{-r x_K}. $$
Then, $g_K \in L^1(\mathbb{R})$, such that $g\in L^1(\mathbb{R}^{K})$.
Moreover, elementary integration shows that
for all $ y \in \mathbb{R}$
$$ \hat{g}_K(y)= \frac{1}{(\i y -r+1)(\i y-r)}. $$ 
Observe that $|\hat{g}_K(y)|_{\mathbb{C}}= (((1-r)^2+ y^2)(r^2+y^2))^{-1/2}$, thus,
 $\hat{g}_K\in L^1(\mathbb{R})$. Therefore, combining the last result with \eqref{eq:fhat}, we deduce that $\hat{g}\in L^1(\mathbb{R}^{K})$, and 
 \begin{equation}
   \hat{g}(y_1, \ldots, y_K)=  \frac{1}{(\i y_{K} -r+1)(\i y_{K}-r)}\prod_{l=2}^K \sqrt{\frac{\pi}{\beta \Delta t_l}}e^{-y_{l-1}^2/(4 \beta \Delta t_l)}. \label{Fourier 2}
 \end{equation}
 Since $g, \hat{g}\in L^1(\mathbb{R}^{K})$, we can apply Theorem 3.2 in \cite{EP} and obtain 
 \begin{equation}
E^T\bigg[F(D(t_1), \ldots, D(t_{K-1}), D(T))\bigg] = \frac{1}{(2\pi)^{K}}\int_{\mathbb{R}^{K}}\tilde N(R+\i u)\hat{F}(\i R-u)du, \label{Price-Fourier'}
\end{equation}
with $R:=(0, \ldots, 0, r)\in\mathbb{R}^K$, $1<r<2$, and  $\tilde N(R+\i u)$ defined as 
  \begin{align*}
   \tilde N(R+\i u)&:=E^T\bigg[e^{\i u_1 D(t_1) + \ldots + \i u_{K-1}D(t_{K-1})+ (\i u_K +r)D(T)}\bigg].
  \end{align*}
As above, using the notation from \eqref{E}, we derive
\begin{align*}
\tilde N(R+\i u)
  &= \tilde D(u -\i R ,T)
       e^{-\int_0^T A(s,T)ds}  \\
  & \quad \times E_Q\bigg[\exp\Big( \int_0^{T} \tilde E(s,u-\i R,T) dL^1_s + \int_0^{T} \tilde F(s,u-\i R) dL^2_s \Big)\bigg].
  \end{align*}
 Observe that as $1<r <2$, and in virtue of \eqref{AssM} and \eqref{assM12},  $r \sigma_2(s)\leq M_2$ and $|r\beta(s) +(1-r)\Sigma(s,T)| \leq (2r-1)\frac{M_1}{3} \leq M_1$. Thus, the above expectation exists.
 Using independence of $L^1$ and $L^2$ and \eqref{cumulant}, we obtain that
 \begin{eqnarray}
 \tilde N(R+\i u)
  &=& \tilde D(u-\i R  ,T)  e^{-\int_0^T A(s,T)ds}\notag \\
  && \times \exp\Big( \int_0^T \theta^1_s(\tilde{E}(s,u-\i R,T)) ds + \int_0^T \theta^2_s(\tilde{F}(s,u-\i R))    ds \Big). \label{M'}
 \end{eqnarray}
 On the other hand, observe that for any $u\in \mathbb{R}^{K}$, 
 $$\hat{g}(u)=\int_{\mathbb{R}^{K}}e^{\i \langle u, x \rangle } e^{- \langle R , x \rangle} F(x)dx = \hat{F}(u+\i R). $$ Thus, we deduce that
 \begin{eqnarray}
 \hat{F}(\i R - u) =\hat{g}(-u) 
 &=& \frac{1}{(\i u_{K} +r-1)(\i u_{K}+r)}\prod_{l=2}^K \sqrt{\frac{\pi}{\beta \Delta t_l}}e^{-u_{l-1}^2/(4 \beta \Delta t_l)}. \label{Fourier 3}
 \end{eqnarray}
 Plugging \eqref{M'} and \eqref{Fourier 3} into \eqref{Price-Fourier'}, the claim follows.

 It remains to compute $Q(\tau^m(x) > T)$. Given the setup in section \ref{sec:mortality}, It\^{o}'s lemma and Fubini's theorem (see \citealp{Escobarea2016}, as well for an alternative argument) imply that for any $0\leq t \leq T$ 
\begin{equation}
   Q(\tau^m(x) > t) = \exp\Big(A_x(t) + B_x(t)\lambda_0^m(x)\Big),  \label{Q^m}
\end{equation}
with 
\begin{eqnarray*}
A_x(t)&:=& \frac{c_1\exp(c_2t)}{c_3(c_2+c_3)}[1-\exp(-(c_2+c_3)t)]  \\
&&+\frac14\Big(\frac{c_4}{c_3}\Big)^2 \frac{\exp(2c_5t)}{c_5}[1- \exp(-2c_5t)]\\
&& - \Big(\frac{c_4}{c_3}\Big)^2 \frac{\exp(2c_5t)}{2c_5 +c_3}[1- \exp(-(2c_5+c_3)t)] \\
&&- \frac{c_1\exp(c_2 t)}{c_2c_3}[1-\exp(-c_2 t)]\\
&&+\frac14\Big(\frac{c_4}{c_3}\Big)^2 \frac{\exp(2c_5t)}{c_3+c_5}[1- \exp(-2(c_3+c_5)t)],\\ 
B_x(t)&:=& \frac{1}{c_3}[\exp(-c_3 t) -1],
\end{eqnarray*}
and
\begin{equation*}
 c_1:=  \frac{\kappa}{b}\exp\Big(\frac{x-z}{b}\Big), c_2:=\frac{1}{b} - \lambda, c_3:= \kappa - \frac{1}{b}, c_4:= \frac{\sigma}{b}\exp\Big(\frac{x-z}{b}\Big), c_5:=\frac{1}{b},  
\end{equation*}
with  $ \kappa,  b, z, \lambda$ and $\sigma$ as in section \ref{sec:mortality}.
  \end{proof} 

\subsection{Death benefit}
In this section we compute the value of the death benefit previously defined as
\begin{align*}
      {\rm P}^{\rm DB} &= \sum_{i=1}^N  E_{Q}\bigg[e^{-\int_0^{\bar t_i} r(u)du} {\rm DB}(\bar t_i)\bigg].
  \end{align*}
Recall the definition of $w_l$ from Equations \eqref{def:wk} and \eqref{def:wk2}. Further, for $i\in \{1, \ldots, N\}$, we define
\begin{equation}
	w_{\bar t_i}:=   \int_0^{\bar t_i}A(s,\bar t_i)ds +\int_0^{\bar t_i}f(0,s)ds- \delta \bar t_i -\omega(\bar t_i)  -p(\bar t_i). \label{def:wk3}
 \end{equation}
  We recall that $\Delta t_l:=t_l -t_{l-1}$. For all $j\in \{1,\ldots, K-1\}$, let $R:=(0,\ldots,0,r)\in \mathbb{R}^{j+1}$ with $1<r<2$, $u \in \R^{j}$, and $v \in \C^{j+1}$.
For all $0 \le s \le T$, for all $j\in \{1,\ldots, K-2\}$ and all $i\in \{1, \ldots, N\}$ such that $t_j < \bar t_i \le t_{j+1}$, and for $j=K-1$ and all $i$ such that    $t_{K-1} < \bar t_i \leq T$, we define

\begin{eqnarray*}
D^{j,i}(u,T) &:=& \exp\bigg( \i\sum_{l=1}^{j} u_{l}w_l    \bigg),\notag\\
\tilde D^{j,i}(v,T) &:=& D^{j,i}(v_1,\dots,v_{j},T) \exp\Big(\i v_{j+1}w_{\bar t_i}\Big), 
\end{eqnarray*}
\begin{eqnarray*}
E_{j,i}(s,u,T)&:=& \Sigma(s,\bar t_i)+ \i(\beta(s) - \Sigma(s,T))\sum_{l=1}^{j} u_{l} \ind{0\leq s \leq t_{l}} 
	  ,\notag\\
	 \tilde E_{j,i}(s,v,T)&:=& E_{j,i}(s,v_1,\dots,v_{j},T) + \i  (\beta(s) - \Sigma(s,\bar t_i))v_{j+1},
	 \end{eqnarray*}
\begin{eqnarray}
F_{j}(s,u)&:=& \i\sigma_2(s)\sum_{l=1}^{j} u_{l} \ind{0\leq s \leq t_{l}}, \label{DB notation}\\
 \tilde F_j(s,v)&:=& F_j(s,v_1,\dots,v_{j}) + \i  \sigma_2(s)v_{j+1} , \notag
 \end{eqnarray}
\begin{eqnarray*}
M^{j,i}(u,T)&:=& D^{j,i}(u,T) \; \exp\Big(  \int_0^{\bar t_{i}} \Big( \theta^1_s( E_{j,i}(s,u,T) )+ \theta^2_s ( F_{j}(s,u)) \Big) ds  \Big)\notag\\
&& \times \prod_{l=2}^{j+1} \sqrt{\frac{\pi}{\beta \Delta t_{l}}}e^{-u_{l-1}^2/(4 \beta \Delta t_{l})}, \notag\\
N^{j,i}(v,T)&:=& \tilde D^{j,i}(v-\i R,T) \; \exp\Big(  \int_0^{\bar t_{i}} \Big( \theta^1_s( \tilde E_{j,i}(s,v- \i R,T) )+ \theta^2_s ( \tilde F_{j}(s,v- \i R)) \Big) ds  \Big)\notag\\
&& \times\frac{\exp\big(p(\bar t_i)(\i v_{j+1} +r)\big)}{(\i v_{j+1} +r-1)(\i v_{j+1}+r)}\prod_{l=2}^{j+1} \sqrt{\frac{\pi}{\beta \Delta t_{l}}}e^{-v_{l-1}^2/(4 \beta \Delta t_{l})}, \text{ for } v \in \mathbb{R}^{j+1}. \notag
\end{eqnarray*}

Finally, we define  for $0\leq s\leq T$, $u\in \mathbb{R}$ and $i\in\{1,\ldots,N\}$
\begin{eqnarray*}
E_1(s,u)&:=& \Sigma(s, \bar t_i)+ (r+\i u)(\beta(s)-\Sigma(s, \bar t_i)),\\
F_1(s,u) &:=& (r+ \i u) \sigma_2(s),\\
N^i(u)&:=& \exp\Big((r +\i u)w_{\bar t_i} \Big)\;\exp\Big( \int_0^{\bar t_i}\theta^1_s(E_1(s,u))ds + \int_0^{\bar t_i}\theta^2_s(F_1(s,u))ds\Big)\\
&& \times \frac{\exp\big(p(\bar t_i)(\i u +r)\big)}{(\i u +r-1)(\i u+r)}.
\end{eqnarray*}
 The following result for the DB value holds. 
\begin{theorem}\label{PDB}
The price ${\rm P}^{\rm DB}$ is given by
\begin{eqnarray*} 
{\rm P}^{\rm DB} &=& \sum_{i:\: \bar t_i \le t_1}Q( \tau^m(x) \in [\bar{t}_{i-1}, \bar{t}_i))( G(\bar t_i)B(0, \bar t_i) + G(\bar t_i)B(0, \bar t_i) A_{0,i})\\
&& + \sum_{j=1}^{K-2}\sum_{i:\:\bar t_i\in(t_j,t_{j+1}]}  Q\big( \tau^m(x) \in [\bar{t}_{i-1}, \bar{t}_i)\big)G(\bar t_i)B(0,\bar t_i)\big(A^1_{j,i} + A^2_{j,i}\big)\\
&& + \sum_{i:\: \bar t_i\in (t_{K-1},T]} Q\big( \tau^m(x) \in [\bar{t}_{i-1}, \bar{t}_i)\big)G(\bar t_i)B(0,\bar t_i)\big(A^1_{K-1,i} + A^2_{K-1,i}\big),
\end{eqnarray*}
where, using the notation from \eqref{DB notation}, for $j\in \{1,\ldots, K-1\}$
\begin{eqnarray*}
A_{0,i}&=&\frac{e^{-\int_0^{\bar t_i} A(s,\bar t_i)ds}}{2\pi}\int_{\mathbb{R}} N^i(u)du,\\
A_{j,i}^1&=& \frac{e^{-C(t_{j+1} -t_1)}}{(2\pi)^{j}}e^{-\int_0^{\bar t_i} A(s,\bar t_i)ds }\int_{\mathbb{R}^{j}}M^{j,i}(u,T) du,\\
A_{j,i}^2&=& \frac{e^{-C(t_{j+1} -t_1)}}{(2\pi)^{j+1}}e^{-\int_0^{\bar t_i} A(s,\bar t_i)ds }\int_{\mathbb{R}^{j+1}}N^{j,i}(u,T) du.
\end{eqnarray*}
\end{theorem}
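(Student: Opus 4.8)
The plan is to follow the proof of Theorem~\ref{prop:A1} almost verbatim, handling each maturity $\bar t_i$ separately and summing at the end; the only genuinely new ingredient is the bookkeeping of two maturities, since the surrender spread $D$ of \eqref{Dt} is anchored at the contract maturity $T$ (hence carries $\Sigma(\cdot,T)$), whereas the death-benefit payoff and its discounting live at $\bar t_i$ (hence carry $\Sigma(\cdot,\bar t_i)$). First I would decouple mortality and surrender. Fix $i$ and let $j$ be such that $\bar t_i\in(t_j,t_{j+1}]$, with the conventions $j=0$ when $\bar t_i\le t_1$ and $j=K-1$, $t_{K}=t_K$, when $\bar t_i\in(t_{K-1},T]$. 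Because ${\bf t}\subset\bar{\bf t}$, the grid point $t_j$ is itself a monitoring date, so $\bar t_{i-1}\ge t_j$ and no surrender date lies strictly inside $(\bar t_{i-1},\bar t_i)$; hence on $\{\tau^m(x)\in[\bar t_{i-1},\bar t_i)\}$ the surrender dates not exceeding $\tau^m(x)$ are exactly $t_1,\dots,t_j$, and, up to the $Q$-null set $\{\tau^m(x)=t_j\}$, the event $\{\tau^m(x)<\tau^s\}$ coincides with the $\ccF^{L^1,L^2}$-measurable event $\{\tau^s\ge t_{j+1}\}$ described by \eqref{def:taus}. Using the independence of $\tau^m(x)$ from the financial filtration together with the doubly-stochastic/immersion structure (conditional independence of $\tau^m(x)$ and $\tau^s$ given $\ccF^{L^1,L^2}$, as in \eqref{mortality indep}), I would condition on $\ccF^{L^1,L^2}$ to obtain
\[
E_Q\Big[e^{-\int_0^{\bar t_i}r(u)du}\,{\rm DB}(\bar t_i)\Big]= Q\big(\tau^m(x)\in[\bar t_{i-1},\bar t_i)\big)\,E_Q\Big[e^{-\int_0^{\bar t_i}r(u)du}\,e^{-\int_0^{t_{j+1}}\lambda^s_u du}\,\max(IS_{\bar t_i},G(\bar t_i))\Big],
\]
with the surrender factor absent when $\bar t_i\le t_1$. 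The mortality weight $Q(\tau^m(x)\in[\bar t_{i-1},\bar t_i))=Q(\tau^m(x)>\bar t_{i-1})-Q(\tau^m(x)>\bar t_i)$ is then read off from the closed form \eqref{Q^m} already established.

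Next, passing to the $\bar t_i$-forward measure $Q^{\bar t_i}$ (defined as in \eqref{forward measure} with $T$ replaced by $\bar t_i$) pulls out $B(0,\bar t_i)$ and removes the discount factor, and the split $\max(IS_{\bar t_i},G(\bar t_i))=G(\bar t_i)[1+(IS_{\bar t_i}/G(\bar t_i)-1)^+]$ separates the expectation into the survival/no-surrender cost $A^1_{j,i}$ and the embedded-option value $A^2_{j,i}$; when $\bar t_i\le t_1$ the first part reduces to the pure guarantee value $G(\bar t_i)B(0,\bar t_i)$ and only $A_{0,i}$ survives. I would then insert the representations \eqref{D:appendix} and \eqref{Bank ac} to write $D(t_l)=\int_0^{t_l}\sigma_2(s)dL^2_s+\int_0^{t_l}(\beta(s)-\Sigma(s,T))dL^1_s+w_l$ and, analogously, $\log(IS_{\bar t_i}/G(\bar t_i))=Y_{\bar t_i}-\delta\bar t_i$ as an affine functional of the same \Levy integrals but anchored at $\bar t_i$, while the density of $Q^{\bar t_i}$ contributes $\int_0^{\bar t_i}\Sigma(s,\bar t_i)dL^1_s-\int_0^{\bar t_i}A(s,\bar t_i)ds$. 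Collecting the $L^1$- and $L^2$-integrands reproduces exactly $E_{j,i},\tilde E_{j,i},F_j,\tilde F_j$ and the constants $w_l,w_{\bar t_i}$ of \eqref{def:wk3}--\eqref{DB notation}; in particular the mismatch between the penalized spread (which carries $-p(\bar t_i)$ inside $w_{\bar t_i}$) and the un-penalized payoff is what produces the correction $e^{p(\bar t_i)(\i v_{j+1}+r)}$ appearing in $N^{j,i}$.

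Finally I would invoke the Fourier machinery of the GMAB proof. The surrender factor equals $e^{-C(t_{j+1}-t_1)}f(D(t_1),\dots,D(t_j))$ with $f$ a product of Gaussians whose transform is \eqref{Fourier 1}--\eqref{eq:fhat}; for $A^2_{j,i}$ the factor $(e^{Y_{\bar t_i}-\delta\bar t_i}-1)^+$ is dampened by $e^{-r(\cdot)}$ with $1<r<2$ to enter $L^1(\mathbb{R})$, yielding the kernel $((\i v+r-1)(\i v+r))^{-1}$ as in \eqref{Fourier 3}. Theorem~3.2 of \cite{EP} then turns each expectation into the stated integral, the exponents $\int_0^{\bar t_i}(\theta^1_s+\theta^2_s)ds$ following from \eqref{cumulant} and the independence of $L^1,L^2$; the moment conditions on the damped integrand hold by \eqref{AssM} and \eqref{assM12} exactly as in the GMAB case (with $T$ replaced by $\bar t_i$, and noting $r\sigma_2\le M_2$, $|r\beta+(1-r)\Sigma|\le M_1$). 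Summing over the three ranges $\bar t_i\le t_1$, $\bar t_i\in(t_j,t_{j+1}]$ for $j=1,\dots,K-2$, and $\bar t_i\in(t_{K-1},T]$ yields the claimed formula.

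The main obstacle is the decoupling in the first step: one must argue carefully that, thanks to the discreteness of surrender and the nesting ${\bf t}\subset\bar{\bf t}$, the coupled event $\{\tau^m(x)\in[\bar t_{i-1},\bar t_i)\}\cap\{\tau^m(x)<\tau^s\}$ reduces to the product of a pure mortality probability and the $\ccF^{L^1,L^2}$-measurable factor $e^{-\int_0^{t_{j+1}}\lambda^s_u du}$. Once this reduction and the correct identification of $j=j(i)$ are in place, the remainder is the (lengthy but routine) two-maturity bookkeeping together with the same Fourier inversion already used for the GMAB.
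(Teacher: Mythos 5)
Your proposal follows essentially the same route as the paper's own proof: decoupling mortality from the financial/surrender factors via \eqref{mortality indep}, identifying $\ind{\tau^s\ge\bar t_i}$ with $e^{-\int_0^{t_{j+1}}\lambda^s(u)du}$ thanks to the discreteness of surrender dates and ${\bf t}\subset\bar{\bf t}$, passing to the $\bar t_i$-forward measure, splitting the maximum, and applying the Fourier inversion of Theorem~3.2 in \cite{EP} with the two-maturity bookkeeping (including the penalty correction $e^{p(\bar t_i)(\i v_{j+1}+r)}$ and the damping $1<r<2$ with the same moment checks). Your treatment of the mortality--surrender decoupling is in fact spelled out more explicitly than in the paper, which simply asserts it; the rest coincides step for step.
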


The proof of Theorem \ref{PDB} is similar to the proof of Theorem \ref{prop:A1}. For this reasons we defer it to the  Appendix \ref{app:proofDB}.

\begin{remark} We have
\[
    E_{Q}\bigg[e^{-\int_0^{\bar t_i} r(u)du} {\rm DB}(\bar t_i)\bigg]
     =  Q( \tau^m(x) \in [\bar{t}_{i-1}, \bar{t}_i))E_{Q}\bigg[e^{-\int_0^{\bar{t}_i} r(u)du} \ind{ \tau^s \geq \bar{t}_i}\max(I S(\bar{t}_i), G(\bar{t}_i))\bigg]. 
  \]
As shown in the previous theorem this last expectation can be computed directly. However, this expression can also be traced back to ${\rm P}^{\rm GMAB}$. Note, that by definition,
  $$ E_{Q}\bigg[e^{-\int_0^{\bar{t}_i} r(u)du} \ind{ \tau^s  \geq \bar{t}_i}\max(I S(\bar{t}_i), G(\bar{t}_i))\bigg] 
     = \frac{{\rm P}^{\rm GMAB}(\bar{t}_i)}{Q( \tau^m(x)> \bar{t}_{i} )}. $$

To facilitate the computation of ${\rm P}^{\rm GMAB}(\bar{t}_i)$ we suggest the following approximation: first, for $j$ such that $t_j < \bar t_i \le t_{j+1}$, we obtain that
\begin{align*}
   {\rm P}^{\rm GMAB}(\bar{t}_i) = 
   Q( \tau^m(x)> \bar{t}_{i} )  E_{Q}\bigg[e^{-\int_0^{\bar{t}_i} r(u)du} e^{-\int_0^{t_{j+1}} \lambda^s(u) du}\max(I S(\bar{t}_i), G(\bar{t}_i))\bigg].
\end{align*}
Now, we define $\Hat{D}(\bar{t}_i)=Y(\bar{t}_i) - \delta \bar{t}_i$, then 
\[
\max(I S(\bar{t}_i), G(\bar{t}_i))= G(\bar{t}_i)\Big[1+ \Big(\exp(\hat{D}(\bar{t}_i)) -1\Big)^+\Big].
\]
The suggested approximation is
\begin{align*}
   {\rm \hat P}^{\rm GMAB}(\bar{t}_i) = 
   Q( \tau^m(x)> \bar{t}_{i} )G(\bar{t}_i)  E_{Q}\Big[e^{-\int_0^{\bar{t}_i} r(u)du} e^{-\int_0^{ t_{j+1}}  \lambda^s(u) du}\big(1+ \big(\exp(\hat{D}(\bar{t}_i)) -1\big)^+\big)\Big],
\end{align*}
which can be evaluated  using the results from Theorem \ref{prop:A1},  replacing $T$ by $\bar t_i$ and $K$ by the $j+1$ for which $t_j < \bar t_i \le t_{j+1}$.

\end{remark}

\subsection{Surrender benefit}
Finally, we compute the value of the surrender benefit 
$$ {\rm P}^{\rm SB} = \sum_{i=1}^{K-1}  E_{Q}\Big[   e^{-\int_0^{t_{i}} r(u)du} {\rm SB}(t_{i})\Big]. $$
 We use the spot measure, i.e.~the measure with the stock price  chosen as numeraire. This allows to exploit the dependence between the surrender intensity and the stock price. 

Recall the definition of $w_l$ from Equations \eqref{def:wk} and \eqref{def:wk2}. Further, for $0 \le s \le T$, $i\in \{2,\ldots, K-1\}$, $u \in \R^{i-1}$, and $v \in \R^{i}$, we define

 \begin{eqnarray*}
    D^{i}(u,T) &:=& \exp\bigg( \i\sum_{l=1}^{i-1} u_{l}w_l - \omega(t_{i}) \bigg), \notag\\
    \tilde D^{i}(v,T) &:=& D^{i}(v_1,\ldots,v_{i-1},T)  \exp\Big( \i\  v_{i}w_i  \Big), 
    \end{eqnarray*}
    \begin{eqnarray*}
  	E^{i}(s,u,T)&:=& \i(\beta(s) - \Sigma(s,T))\sum_{l=1}^{i-1} u_{l} \ind{0\leq s \leq t_{l}} 
	  + \beta(s),\notag \\
	\tilde E^{i}(s,v,T) &:=& E^{i}(s,v_1,\ldots,v_{i-1},T) + \i  (\beta(s) - \Sigma(s,T))v_{i},   
	\end{eqnarray*}
	\begin{eqnarray}\label{Ei}
  	F^{i}(s,u)&:=& \i\sigma_2(s)\sum_{l=1}^{i-1} u_{l} \ind{0\leq s \leq t_{l}}  + \sigma_2(s), \\
  	\tilde F^{i}(s,v) &:=& F^{i}(s,v_1,\ldots,v_{i-1}) + \i \sigma_2(s)v_{i},  \notag 
  	\end{eqnarray}
  	\begin{eqnarray*}
  	M^{i}(u,T) &:=& D^{i}(u,T) \;	\exp\Big(  \int_0^{t_{i}} \Big( \theta^1_s( E^{i}(s,u,T) )+ \theta^2_s ( F^{i}(s,u)) \Big) ds  \Big) \notag \\
  	&&  \times \prod_{l=2}^{i} \sqrt{\frac{\pi}{\beta \Delta t_{l}}}e^{-u_{l-1}^2/(4 \beta \Delta t_{l})}, \notag \\
    N^{i}(v,T) &:=& \tilde D^{i}(v,T) \;	\exp\Big(  \int_0^{t_{i}} \Big( \theta^1_s( \tilde E^{i}(s,v,T) )+ \theta^2_s ( \tilde F^{i}(s,v)) \Big) ds  \Big) \notag \\
  	&&  \times  \prod_{l=2}^{i+1} \sqrt{\frac{\pi}{\beta \Delta t_{l}}}e^{-v_{l-1}^2/(4 \beta \Delta t_{l})}.\notag
  	\end{eqnarray*}

\begin{theorem}\label{surrender} 
   The price ${\rm P}^{\rm SB}$ is given by  
   $$ {\rm P}^{\rm SB} = I \sum_{i=1}^{K-1} P(t_{i})Q(\tau^m(x) > t_{i}) (B_{i}^{1} - B_{i}^{2}), $$
   where $B_1^{1}=1$ and for $i\in \{2, \ldots, K-1\}$,
   \begin{eqnarray*}
      B_{i}^{1}&=& \frac{ e^{-C(t_{i} -t_1)} }{ (2\pi)^{i-1}} 
      \int_{\mathbb{R}^{i-1}} M^{i}(u,T) du, 
 \end{eqnarray*}
  and, for $i\in\{1, \ldots, K-1\},$ 
 \begin{eqnarray*}
      B_{i}^{2}&=&  \frac{ e^{-C(t_{i+1} -t_1)} }{ (2\pi)^{i}} 
      \int_{\mathbb{R}^{i}} N^{i}(u,T) du.
 \end{eqnarray*}
\end{theorem}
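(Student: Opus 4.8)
The plan is to evaluate each summand $E_Q[e^{-\int_0^{t_i}r(u)du}\,{\rm SB}(t_i)]$ separately along the lines of Theorem~\ref{prop:A1}, but with the $T$-forward measure replaced by the spot measure $Q^S$ (stock as num\'eraire), which is natural here because both the payoff $IS_{t_i}P(t_i)$ and the surrender intensity \eqref{lambda} are driven by $S$. Conditioning on $\ccG:=\ccF^{L^1,L^2}_{\infty}\vee\sigma(W)$, the times $\tau^s$ and $\tau^m(x)$ become conditionally independent with survival functions fixed by their intensities. On $\{\tau^s=t_i\}$ the benefit is paid only if the holder is alive, so the operative indicator is $\ind{\tau^s=t_i}\ind{\tau^m(x)>t_i}$; since surrender is confined to the grid ${\bf t}$, \eqref{def:taus} gives
\[
Q(\tau^s=t_i\mid\ccG)=\exp\Big(-\int_0^{t_i}\lambda^s_u\,du\Big)-\exp\Big(-\int_0^{t_{i+1}}\lambda^s_u\,du\Big),
\]
both exponents being $\ccF^{L^1,L^2}_{t_i}$-measurable by \eqref{lambda}. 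As $\lambda^m$ depends only on $W$ while the financial quantities depend only on $L^1,L^2$, taking the outer expectation and invoking \eqref{mortality indep} factors out $Q(\tau^m(x)>t_i)$ and produces
\[
E_Q\big[e^{-\int_0^{t_i}r(u)du}{\rm SB}(t_i)\big]=I\,P(t_i)\,Q(\tau^m(x)>t_i)\,(B_i^1-B_i^2),
\]
with $B_i^1,B_i^2$ the $Q$-expectations of $e^{-\int_0^{t_i}r(u)du}S_{t_i}$ against $\exp(-\int_0^{t_i}\lambda^s)$ and $\exp(-\int_0^{t_{i+1}}\lambda^s)$ respectively.

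Next I would pass to the spot measure. By \eqref{eq:stock} with $S_0=1$ one has $e^{-\int_0^{t_i}r(u)du}S_{t_i}=\exp(\int_0^{t_i}\beta(s)dL^1_s+\int_0^{t_i}\sigma_2(s)dL^2_s-\omega(t_i))$, which is exactly $dQ^S/dQ$ on $\ccF^{L^1,L^2}_{t_i}$; hence $B_i^1=E^{Q^S}[\exp(-\int_0^{t_i}\lambda^s_u\,du)]$ and $B_i^2=E^{Q^S}[\exp(-\int_0^{t_{i+1}}\lambda^s_u\,du)]$. Writing $\int_0^{t_i}\lambda^s_u\,du=C(t_i-t_1)+\beta\sum_{l=2}^{i}\Delta t_l\,D(t_{l-1})^2$ from \eqref{lambda}, we get $B_i^1=e^{-C(t_i-t_1)}E^{Q^S}\big[\prod_{l=2}^{i}e^{-\beta\Delta t_l D(t_{l-1})^2}\big]$ (and analogously for $B_i^2$ with the product running to $l=i+1$), which is precisely the Gaussian-product expectation handled in the $A_1$-part of Theorem~\ref{prop:A1}; the only difference is the number of factors, $i-1$ for $B_i^1$ and $i$ for $B_i^2$. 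In particular, for $i=1$ the product is empty and $B_1^1=E^{Q^S}[1]=1$.

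Finally I would apply the Fourier machinery verbatim from Theorem~\ref{prop:A1}. Each Gaussian factor has transform \eqref{Fourier 1}, so $\hat f\in L^1$, and Theorem~3.2 in \cite{EP} rewrites the expectation as $(2\pi)^{-(i-1)}\int_{\R^{i-1}}\tilde M(\i u)\hat f(-u)\,du$ with $\tilde M(\i u)=E^{Q^S}[\exp(\i\sum_l u_l D(t_l))]$. Substituting the appendix representation of $D$, the extra density $dQ^S/dQ$ contributes the additional summands $\beta(s)$ and $\sigma_2(s)$, which is precisely what promotes the functions $E,F$ of Theorem~\ref{prop:A1} to $E^i,F^i$ of \eqref{Ei}; independence of $L^1,L^2$ together with the cumulant formula \eqref{cumulant} then produces the $\theta^1,\theta^2$-integrals, delivering $M^i,N^i$ and hence the stated $B_i^1,B_i^2$.

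The main obstacle is the risk-separation step: one must argue that surrender at exactly $t_i$ contributes the difference of the two survival exponentials and that this difference is preserved by the change of num\'eraire as $B_i^1-B_i^2$ (both being computed under the same spot measure, so the subtraction is mere linearity), all while keeping $\tau^m$ and $\tau^s$ correctly disentangled. The remaining points are routine: since $\re E^i(s,u,T)=\beta(s)$ and $\re F^i(s,u)=\sigma_2(s)$, bounds \eqref{AssM} and \eqref{assM12} guarantee finiteness of the cumulants, and, unlike the option term $A_2$, the linear surrender payoff carries no $(e^x-1)^+$ factor and therefore requires no $R$-shift.
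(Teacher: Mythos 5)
Your proposal is correct and takes essentially the same route as the paper's own proof: the same decomposition over surrender dates with the mortality probability factored out, the same conversion of $\ind{\tau^s=t_i}$ into the difference $e^{-\int_0^{t_i}\lambda^s_u du}-e^{-\int_0^{t_{i+1}}\lambda^s_u du}$ via \eqref{def:taus}, the same passage to the spot measure (the paper's $Q^{S,i}$ is exactly your $Q^S$ restricted to $\ccF^{L^1,L^2}_{t_i}$, which suffices because both exponents are $\ccF^{L^1,L^2}_{t_i}$-measurable), and the same Fourier--cumulant machinery yielding $M^{i}$ and $N^{i}$ with $B_1^1=1$ from the empty product. The only presentational difference is that you make the doubly-stochastic conditional-independence structure explicit by conditioning on $\ccG$, where the paper invokes it implicitly through \eqref{mortality indep} and \eqref{def:taus}.
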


\begin{proof}
  By construction $ {\rm P}^{\rm SB} = \sum_{i=1}^{K-1}  E_{Q}\Big[   e^{-\int_0^{t_{i}} r(u)du} {\rm SB}(t_{i})\Big]$. As 
  $$  {\rm SB}(t_{i})=  \ind{\tau^s =  t_{i}} \ind{\tau^s < \tau^m(x)} IS_{t_{i}}P(t_{i}), $$ we are interested in computing the expression\small
  \begin{eqnarray}
      E_{Q}\bigg[  e^{-\int_0^{t_{i}}r_s ds}   \ind{  \tau^s =  t_{i}} \ind{\tau^s < \tau^m(x)}  S_{t_{i}}) \bigg] 
   & =&E_{Q}\bigg[  e^{-\int_0^{t_{i}}r_s ds}   \ind{   \tau^s=t_{i} } \ind{ t_{i} < \tau^m(x)}  S_{t_{i}} \bigg] \notag \\
    & =& \: Q(\tau^m(x) > t_{i}) \; E_{Q}\bigg[  e^{-\int_0^{t_{i}}r_s ds}   \ind{   \tau^s =t_{i} }   S_{t_{i}} \bigg]. \label{temp35}
   \end{eqnarray}\normalsize
   It follows from \eqref{def:taus} that, for $A\in \mathscr{F}_{t_{i}}^{L^1,L^2}\subset \mathscr{F}_{t_{i+1}}^{L^1,L^2}$,
   \begin{eqnarray*}
   	  E_Q \Big[ \Ind_A \ind{\tau^s = t_{i} } \Big]  
   	  &=&
   	  E_Q \Big[ \Ind_A (\ind{t_{i} \le \tau^s} -  \ind{ t_{i+1}\leq \tau^s}) \Big] \\
   	  &=& E_Q \Big[ \Ind_A \Big( e^{-\int_0^{t_{i}} \lambda^s(u) du } -  e^{-\int_0^{t_{i+1}} \lambda^s(u) du }\Big) \Big].
   \end{eqnarray*}
   Consequently, equation \eqref{temp35} can be written as
   \begin{eqnarray*}
   %\eqref{temp35}
     E_{Q}\bigg[  e^{-\int_0^{t_{i}}r_s ds}   \ind{  \tau^s =  t_{i}} \ind{\tau^s < \tau^m(x)}  S_{t_{i}}) \bigg] 
   	  &=& Q(\tau^m(x) > t_{i}) \; E_{Q}\bigg[  e^{-\int_0^{t_{i}}r_u du} S_{t_{i}} e^{-\int_0^{t_{i}} \lambda^s(u) du }      \bigg] \\
     &&- \: Q(\tau^m(x) > t_{i}) \; E_{Q}\bigg[  e^{-\int_0^{t_{i}}r_u du} S_{t_{i}} e^{-\int_0^{t_{i+1}} \lambda^s(u) du }      \bigg].
  \end{eqnarray*}
$Q(\tau^m(x) > t_{i})$ is as in equation \eqref{Q^m}.	Let us write $B^1_{i}$ for the first expectation and $B^2_{i}$ for the second one. In order to compute these two expectations we introduce the spot probability measure $Q^{S,i},\ i=1,\dots,K-1$ defined by its Radon-Nikodym derivative
   \begin{equation}
  \frac{dQ^{S,i}}{dQ} = e^{-\int_0^{t_{i}}r_u du} S(t_{i}). \label{bis forward measure}
  \end{equation}
 The above defines indeed a density process as the discounted stock price $\big(e^{-\int_0^{t}r_u du}S(t)\big)_t$ is a $Q$-martingale. We can use the new measure to simplify  $B^1_{i}$ and $B^2_{i}$ as follows
  \begin{eqnarray}
   B^1_i &=&E_{Q}\bigg[  e^{-\int_0^{t_{i}}r_u du}  S_{t_{i}}  e^{-\int_0^{t_{i}} \lambda^s(u) du }    \bigg] \notag\\
     &=&  E_{Q^{S,i}}\bigg[ e^{-\int_0^{t_{i}} \lambda^s(u) du }    \bigg], \label{B^1} \\
   B^2_i &=&E_{Q}\bigg[  e^{-\int_0^{t_{i}}r_u du}  S_{t_{i}} e^{-\int_0^{t_{i+1}} \lambda^s(u) du }     \bigg] \notag\\
     &=&  E_{Q^{S,i}}\bigg[ e^{-\int_0^{t_{i+1}} \lambda^s(u) du }    \bigg]. \label{B^2}
  \end{eqnarray}
  Note that $\lambda^s(u)$ for $t_i\leq u < t_{i+1}$ is defined by $D(t_i)$ and consequently it is $\mathscr{F}_{t_i}^{L^1,L^2}$-measurable.
Focusing first on $B^1_i$, by construction  $\lambda^s(u)=0$ for $u\in[0, t_1)$, and therefore  $B_1^{1}=1$.
For $i\in \{2, \ldots, K-1\}$, we follow the same strategy as in the proof of Theorem \ref{prop:A1} for the computation of $A_1$ and consider
  \begin{align}   		e^{C (t_{i} - t_1) } E_{Q^{S,i}} \bigg[ e^{-\int_0^{t_{i}} \lambda^s(u) du }    \bigg] 
  		%&= 
  		%	E_{Q^{S,i}}\bigg[\prod_{l=2}^{a-1} \big(e^{-\beta \Delta t_l D(t_{l-1})^2}\big)\bigg]\notag\\ 
 		&= E_{Q^{S,i}}\bigg[f(D(t_1), \ldots, D(t_{i-1}))\bigg], \label{B1}
 \end{align}
 where $f(x_1, \ldots, x_{i-1}):=\prod_{l=2}^{i} e^{-\beta \Delta t_{l} x_{l-1}^2}$ with $\Delta t_l= t_l  - t_{l-1}$.
 As in \eqref{Price-Fourier} we obtain that the last expectation is
	\begin{equation}
		\frac{1}{(2\pi)^{i-1}}\int_{\mathbb{R}^{i-1}}\tilde M^{i-1}(\i u)\hat{f}(-u)du, \label{SE}
	\end{equation}
  with
	\begin{equation*}
 		\hat{f}(u_1, \ldots, u_{i-1})=  \prod_{l=2}^{i} \sqrt{\frac{\pi}{\beta \Delta t_{l}}}e^{-u_{l-1}^2/(4 \beta \Delta t_{l})},
	\end{equation*}  
  and $\tilde M^{i-1}(\i u)$ defined as 
    \begin{align*}
  		 \tilde M^{i-1}(\i u) &= E_{Q^{S,i}}\bigg[e^{\i u_1 D(t_1) + \ldots + \i u_{i-1}D(t_{i-1})}\bigg].
  	\end{align*}
  From \eqref{D:appendix} it follows that
  	\begin{eqnarray*}
		\tilde M^{i-1}(\i u) &=&
		E_{Q}\bigg[e^{\i u_1 D(t_1) + \ldots + \i u_{i-1}D(t_{i-1})+ \int_0^{t_{i}} \sigma_2(s)dL^2_s + \int_0^{t_{i}}\beta(s)dL^1_s -\omega(t_{i})}\bigg]\\
		&=& \exp\Big( \i\sum_{l=1}^{i-1} u_{l}(-p(t_{l}) - \delta T+ \int_0^T f(0,s)ds + \int_0^{t_{l}}A(s,T)ds -\omega(t_{l})) \Big)\\
	 	&& \times E_{Q}\bigg[
	 	   \exp\bigg( \i\sum_{l=1}^{i-1} \Big(\int_0^{t_{l}} u_{l}\sigma_2(s)dL^2_s + \int_0^{t_{l}}u_{l}(\beta(s) - \Sigma(s,T))dL^1_s\Big) \\
	 	&&   \qquad \qquad \qquad +  \int_0^{t_{i}} \sigma_2(s)dL^2_s + \int_0^{t_{i}}\beta(s)dL^1_s -\omega(t_{i}) \bigg)   \bigg].
	 \end{eqnarray*}
  By equation \eqref{cumulant} it follows that
  \begin{align*}
  	  \lefteqn{ E_{Q}\Big[\exp\Big(  \int_0^{t_{i}} E^{i}(s,u,T) dL^1_s + \int_0^{t_{i}} F^{i}(s,u) dL^2_s \Big) \Big] } \qquad \qquad \\ 
  	  &= \exp\Big(  \int_0^{t_{i}} \Big( \theta^1_s( E^{i}(s,u,T) )+ \theta^2_s ( F^{i}(s,u)) \Big) ds  \Big) 
  \end{align*}
  Therefore, given the definition of $D^{i}(u,T)$ in \eqref{Ei} we have
  \begin{align*}
  	\tilde M^{i-1}(\i u) &= D^{i}(u,T) 	\exp\Big(  \int_0^{t_{i}} \Big( \theta^1_s( E^{i}(s,u,T) )+ \theta^2_s ( F^{i}(s,u)) \Big) ds  \Big) 
  \end{align*}
  and the representation of $B^1_{i}$ follows from \eqref{SE}.

 Finally, we note that
	 \begin{eqnarray}
 		B_{i}^{2} &=& e^{-C(t_{i+1} -t_1)} E_{Q^{S,i}} \bigg[\prod_{l=2}^{i+1}  e^{-\beta \Delta t_l D(t_{l-1})^2} \bigg].
 		%\notag\\ 
% 			&=& e^{-C(t_{a} -t_1)} E_{Q^{S(t_{a-1})}}\bigg[f(D(t_1), \ldots, D(t_{a-1}))\bigg].\label{B2}
	 \end{eqnarray}
	 Repeating mutatis mutandis the above arguments, the expression for $B^2_i$ follows.
\end{proof}

\section{Numerical implementation and testing}\label{sec:numerics}

This section discusses the numerical computation of the pricing equations in Theorems \ref{prop:A1}, \ref{PDB} and \ref{surrender}, based on the model features introduced in section \ref{sec:market}. 
For the purpose of the numerical analysis, we choose as a relevant L\'evy process the Normal Inverse Gaussian (NIG) process introduced by \cite{BN97} with cumulant function 
\begin{equation*}
\theta(u)=\mu u+\delta\left(\sqrt{\alpha^2-\beta^2}-\sqrt{\alpha^2-(\beta+u)^2}\right), \quad -\alpha-\beta<u<\alpha-\beta,
\label{CF:NIG}
\end{equation*}
for $\mu\in\mathbb{R}$, $\delta>0$, $0\leq |\beta|<\alpha$. The parameter $\alpha$ controls the steepness of the density (and therefore its tail behaviour), $\beta$ primarily controls the (sign of the) skewness of the distribution, whilst  $\delta$ is the scale parameter; the location parameter $\mu$ is instead set to zero, without loss of generality.

Further, we assume a simplified Vasi\v{c}ek structure for the function $\sigma_1(s,T)$ so that for $a>0$
\[\sigma_1(s,T)=\left\{\begin{array}{ll}
a e^{-a(T-s)},& s\leq T\\
0, & s>T
\end{array}\right.\]
and
\[\Sigma(s,T)=\left\{\begin{array}{ll}
1-e^{-a(T-s)},& s\leq T\\
0, & s>T
\end{array}\right. .\]
For the equity part, we assume $\sigma_2(s)=\sigma_2>0$ and $\beta(s)=b\in \mathbb{R}$.
\begin{table}[tbp]
  \centering
  \caption{\small Parameters for the Variable Annuity contract. Reference process: NIG. Financial Market Model parameters - source: \cite{ER}. Mortality Model parameters - source: \cite{Escobarea2016}.}
  \resizebox{0.99\textwidth }{!}{
  \begin{tabular}{rrrlcrrrrrlr}\toprule
    \multicolumn{2}{c}{Variable Annuity} &       & \multicolumn{3}{c}{Financial Market Model} &       & \multicolumn{2}{c}{Surrender Model} &       & \multicolumn{2}{c}{Mortality Model} \\\midrule
    &       &       &       & \multicolumn{1}{c}{$L_t^1$} & \multicolumn{1}{c}{$L_t^2$} &       &       &       &       &       &  \\
    \multicolumn{1}{l}{$T$} & \multicolumn{1}{l}{3, 4, 10 years} &       & $\alpha$ & \multicolumn{1}{r}{4} & 5.73  &       & \multicolumn{1}{l}{$\beta$} & 0.05  &       & $b$     & 12.1104 \\
    \multicolumn{1}{l}{$\delta$} & \multicolumn{1}{l}{0.01 p.a.} &       & $\beta$ & \multicolumn{1}{r}{-3.8} & -2.13 &       & \multicolumn{1}{l}{$C$} & 0.01  &       & $z$     & 76.139 \\
    \multicolumn{1}{l}{$P(t_l)$} & \multicolumn{1}{l}{$0.95+0.05 t_l/T$} &       & $\delta$ & \multicolumn{1}{r}{1.34} & 8.3   &       &       &       &       & $\kappa$ & 0.4806 \\
    \multicolumn{1}{l}{$\Delta t_l$} & \multicolumn{1}{l}{1 year} &       & $a$     & \multicolumn{1}{r}{0.0020898} & \multicolumn{1}{c}{-} &       &       &       &       & $\lambda$ & 0.0195 \\
  $\bar{t}_i-\bar{t}_{i-1}$ & \multicolumn{1}{l}{6 months}      &       & $\sigma_2$     & -     & 0.1818 &       &       &       &       & $\sigma$ & 0.0254 \\
   &       &       & $b$     & -     & 0.0065 &       &       &       &       &  &  \\
    \bottomrule
  \end{tabular}}
  \label{tab:NIGParams}
\end{table}%
All numerical experiments refer to contracts with a parameter setting as in Table \ref{tab:NIGParams}; the parameters of the financial model are taken from the calibration exercise of \cite{ER}. The short maturity contracts are used for benchmarking purposes, whilst the 10 year maturity contract represents a realistic specification for practical purposes. Concerning the possible termination dates, for illustration purposes we use a 1 year frequency, i.e. $\Delta t_l=1$, whilst the time grid for mortality, which is assumed to have finer granularity, is built with a frequency of 6 months. We note though that the results obtained in Theorems \ref{prop:A1}, \ref{PDB} and \ref{surrender} hold for any choice of the relevant time steps. The numerical schemes are implemented in  \textsc{Matlab R2018a} and run on a computer with an Intel i5-6500, 3.20 gigahertz CPU, and 8 gigabytes of RAM.

\subsection{Implementation}
The multidimensional integrals in Theorems \ref{prop:A1}, \ref{PDB} and \ref{surrender} are computed by means of Monte Carlo integration \citep[see for example][and references therein]{PHARR2010636}. Thus, the Monte Carlo estimate of a high-dimensional integral over the domain $\Omega$
\[I=\int_{\Omega}f(x)dx\]
is obtained by evaluating the function $f(x)$ at $M$ points $x$, drawn randomly in $\Omega$ with a given probability density $p(x)$, so that
\[I_{MC}=\frac{1}{M}\sum_{i=1}^{M}\frac{f(x_{i})}{p(x_{i})}.\]
The error is measured by means of the (unbiased) sample variance
\[\frac{\sum_{i=1}^M(I_{i}-I_{MC})^2}{M-1},\]
as in standard Monte Carlo simulation; this point shows that the rate of convergence is independent of the dimensionality of the integrand, which is the advantage of Monte Carlo integration compared to standard numerical quadrature techniques in computing high-dimensional integrals. 
\begin{figure}[tbp]
  \centering
  \includegraphics[width=0.49\textwidth,height=0.25\textheight]{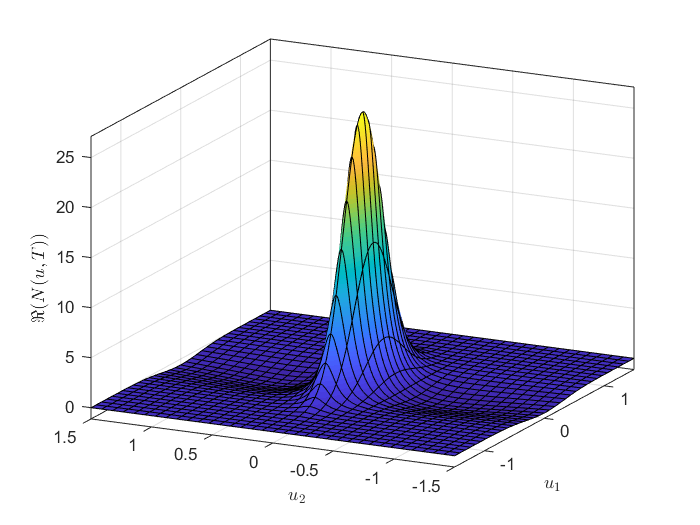}
  \includegraphics[width=0.49\textwidth,height=0.25\textheight]{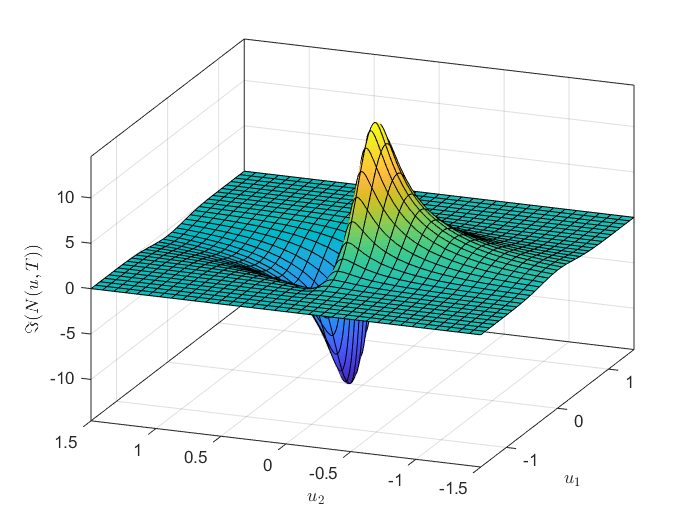}
  \caption{\small $N(u,T)$, $T=3$ years ($K=2$). Left panel: real part of the function $N(u,T)$. Right panel: imaginary part of $N(u,T)$.  Parameters: Table \ref{tab:NIGParams}.  %$M(u,T)$, 
     \label{fig:MuTNuTK2}}
\end{figure}

Classical Monte Carlo integration uses the uniform probability density. However, visual inspection of the integrand functions $M(u,T)$ and $N(u,T)$ of Theorem \ref{prop:A1} reveals that $M(u,T)$ is strongly peaked around the origin, whilst $N(u,T)$ is characterised by alternations of peaks and troughs of significant magnitude -  as shown in Figure \ref{fig:MuTNuTK2} for $T=3$ ($K=2$). 
Similar considerations hold for the integrand functions in Theorems \ref{PDB} and \ref{surrender}, due to the similarities between the corresponding payoff functions. These features, in general, can induce large variances in the Monte Carlo estimate of the corresponding integral.

Consequently, for variance reduction purposes and in order to speed up convergence, we implement importance sampling; in details, given that the integrand functions are strongly peaked around the origin, the importance distribution of choice is the multivariate Gaussian distribution with zero mean (i.e. centered around the peak), independent components, and a given variance matrix, which is treated as a parameter. Full deployment of the pricing algorithm in the setting of VEGAS and MISER Monte Carlo integration is left to future research.

\subsection{Benchmarking and Testing}
Numerical results for the pricing functions considered in this paper are reported in Table \ref{tab:GMABbench} and \ref{tab:fullVA}. In order to provide a reliable benchmark for the Monte Carlo integration procedure introduced above, we consider first some simple examples for which the relevant multidimensional integrals can be tackled with standard quadrature packages. We illustrate the results obtained for the case of the GMAB. Similar performances are obtained for the DB and the SB as well, and we refer the interested reader to the  Appendix \ref{App:bench} for more details.
% Table generated by Excel2LaTeX from sheet 'Table'
\begin{table}[tbp]
  \centering
  \caption{\small Benchmarking Monte Carlo integration with importance sampling - the case of the GMAB. Parameters: Table \ref{tab:NIGParams}. `Quadrature':  Matlab built-in functions {\fontfamily{qcr}\selectfont integral}, {\fontfamily{qcr}\selectfont integral2} and {\fontfamily{qcr}\selectfont integral3}. Bias/standard error expressed as percentage of the actual value. Monte Carlo iterations: 100 batches of size $10^6$. CPU time expressed in seconds and referred to the average time of 1 batch of $10^6$ iterations.}
  \resizebox {0.80\textwidth }{!}{
    \begin{tabular}{lclccccr}\toprule
       &       &       & Quadrature & \multicolumn{4}{c}{Monte Carlo integration (Imp. Sampling)} \\
      \multicolumn{1}{c}{$T$} & $K$     &       & Value & Value & Bias (\%) & Std. Error (\%) &  \\\midrule
      \multicolumn{1}{l}{3 years} & 2     & $A_1$  & 0.9867 & 0.9867 & 0.0035 & 0.0050 &  \\
      &       & $A_2$  & 0.1487 & 0.1482 & 0.3584 & 0.2683 &  \\
      &       & CPU   & 6.6323 & 31.2944 & &       &  \\
      &       &       & \multicolumn{1}{l}{($A_1$: 0.1280)} &       &       &       &  \\
      &       &       & \multicolumn{1}{l}{($A_2$: 6.5043)} &       &       &       &  \\
      &       &       &       &       &       &       &  \\
      \multicolumn{1}{l}{4 years} & 3     & $A_1$  & 0.9703 & 0.9702 & 0.0139 & 0.0076 &  \\
      &       & $A_2$  & 0.1669 & 0.1669 & 0.0155 & 0.0647 &  \\
      &       & CPU   & 589.0926 & 51.6269 & &       &  \\
      &       &       & \multicolumn{1}{l}{($A_1$: 1.3042)} &       &       &       &  \\
      &       &       & \multicolumn{1}{l}{($A_2$: 587.7884)} &       &       &       &  \\
      \bottomrule
    \end{tabular}}
  \label{tab:GMABbench}
\end{table}
\begin{table}[tbp]
  \centering
  \caption{\small Variable Annuity by Monte Carlo integration with importance sampling. Parameters: Table \ref{tab:NIGParams}. Standard error expressed as percentage of the actual value. Monte Carlo iterations: 100 batches of size $10^6$. CPU time expressed in seconds and referred to the average time of 1 batch of $10^6$ iterations.}
  \begin{tabular}{lrrrr}\toprule
    \multicolumn{1}{c}{$T$} & \multicolumn{1}{c}{GMAB} & \multicolumn{1}{c}{DB} & \multicolumn{1}{c}{SB} & \multicolumn{1}{c}{VA} \\\midrule
    4 years & 112.5121 & 4.9280 & 2.6997 & 120.1399 \\
    (Std. Error \%) & 0.0001 & 0.0256 & 0.1142 &  \\
    CPU   & 51.6269 & 264.3202 & 31.7341 &  \\
    &       &       &       &  \\
    10 years & 93.0783 & 14.2344 & 15.4533 & 122.7661 \\
    (Std. Error \%) & 0.0003 & 0.0098 & 0.0534 &  \\
    CPU   & 179.0842 & 1971.3826 & 717.3945 &  \\\bottomrule
  \end{tabular}
  \label{tab:fullVA}%
\end{table}%

In details. Table \ref{tab:GMABbench} reports values obtained with deterministic quadrature methods, the corresponding estimate from Monte Carlo integration, and the CPU time. Together with the Monte Carlo estimate, $I_{MC}$, we also report measures of its accuracy in terms of the absolute value of the bias of the estimator expressed as percentage of the value obtained by quadrature $I_{Q}$, i.e.
\[100\times \frac{|I_{MC}-I_{Q}|}{I_{Q}}.\]
In addition, we also report the percentage standard error of the Monte Carlo estimate
\[100\times\frac{1}{I_{MC}}\sqrt{\frac{\sum_{i=1}^M(I_{i}-I_{MC})^2}{M(M-1)}}.\]
For increased reliability of the estimate of the CPU time, we base the numerical experiment on 100 repetitions of the Monte Carlo algorithm with sample size $10^6$.

We consider two examples for a contract with maturity $T=3$ years (i.e. $K=2$) and $T=4$ years (i.e. $K=3$) respectively. With annually spaced termination dates $t_i$, these choices imply that $A_1$ and $A_2$ are respectively one- and two-dimensional integrals in the first case, and two- and three-dimensional integrals in the second case, which can also be computed using packages for deterministic quadrature methods. 

For importance sampling, numerical experiments show that relatively small biases and standard errors can be obtained by using the same variance fixed at 0.25 for the first $K-1$ dimensions, and increasing this value to 1 for the last $K^{th}$ dimension, as to cater for the higher variability of the integrand function $N(u,T)$. The results in Table \ref{tab:GMABbench} confirm the quality of the estimates as all biases and standard errors are below 0.5\%.

The value of the VA together with its components are reported in Table \ref{tab:fullVA}. We observe that the GMAB value decreases with increasing maturities, due to the larger number of dates at which the contract can be terminated by either death or surrender of the policyholder. Consistently with these findings, we also observe the increase in the value of both the DB and the SB for contracts with longer maturities. The overall value of the VA increases as well with longer maturities, denoting the dominant impact of the DB and the SB components.

Finally, we observe the higher computational cost of the DB and SB parts of the full variable annuity; this is due to the large number of integrals of the cumulant functions of interest (see definitions (\ref{DB notation}) and (\ref{Ei})) be computed. This number depends on the granularity of both the $\mathbf{t}$ and $\bar{\mathbf{t}}$ grids: the finer these grids, the more computational demanding these components of the contract become.

\section{Results: Sensitivity Analysis}\label{sec:sensitivity}

The sensitivity analysis is carried out by perturbing the parameters of interest one at a time, ceteris paribus. The benchmark case is given by a 10 years contract with the parameters set as in Table \ref{tab:NIGParams}.

\begin{figure}[tbp]
  \centering
  \includegraphics[width=0.49\textwidth,height=0.20\textheight]{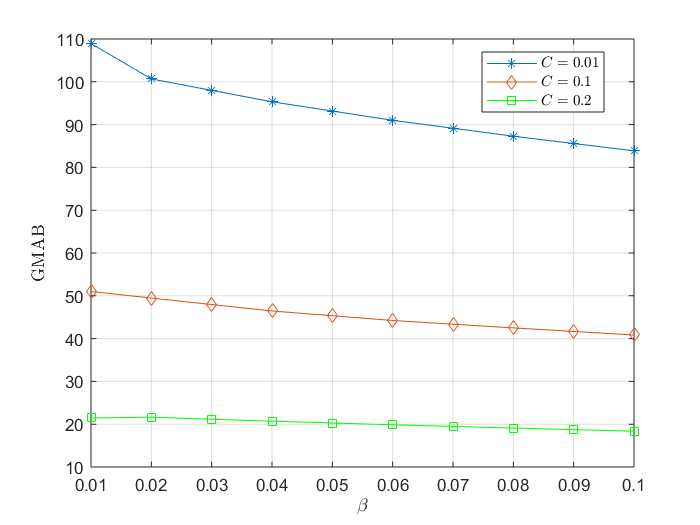}
  \includegraphics[width=0.49\textwidth,height=0.20\textheight]{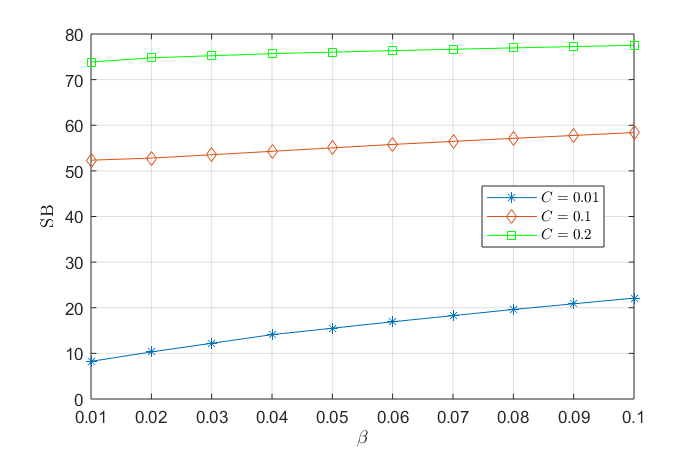}
  \includegraphics[width=0.49\textwidth,height=0.20\textheight]{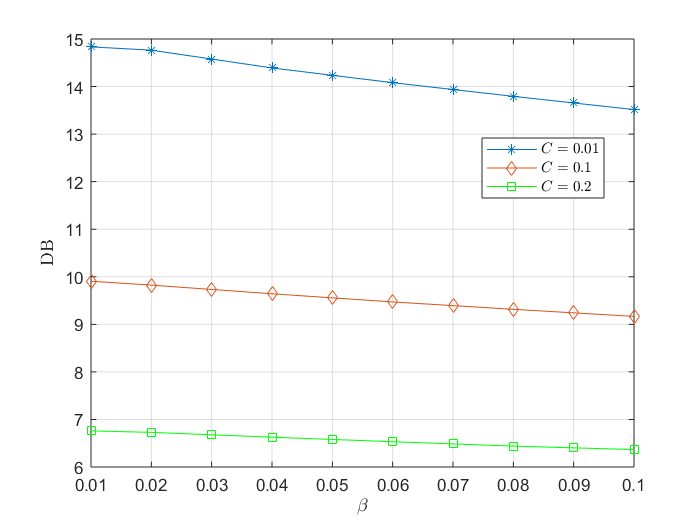}
  \includegraphics[width=0.49\textwidth,height=0.20\textheight]{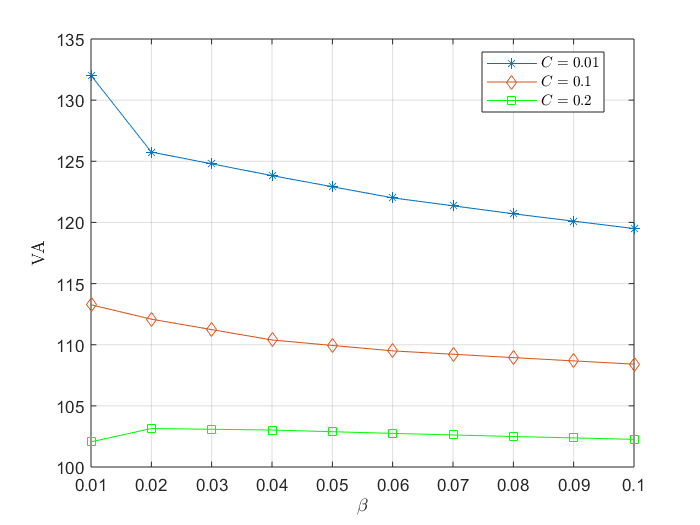}
  \caption{\small Sensitivity Analysis: the surrender parameters $(\beta,C)$. Top panels: left-hand-side - GMAB; right-hand-side: SB. Bottom panels: left-hand-side - DB; right-hand-side - VA. Maturity: $T=10$ years. Other parameters: Table \ref{tab:NIGParams}. \label{fig:betaC}}
\end{figure}
\begin{figure}[htbp]
  \centering
  \includegraphics[width=0.49\textwidth,height=0.20\textheight]{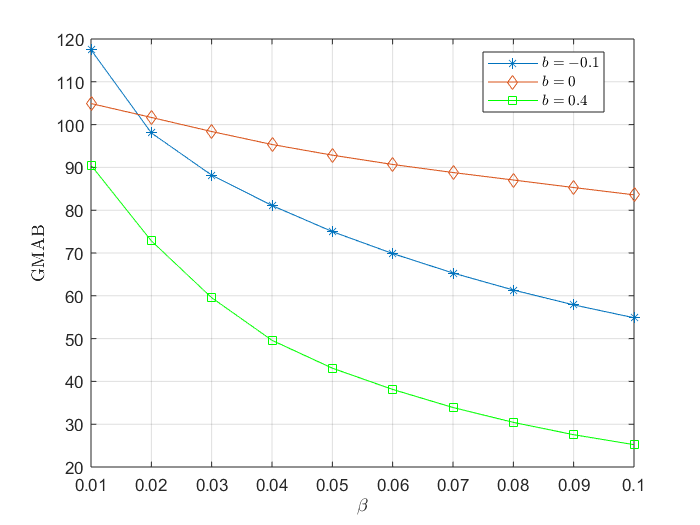}
  \includegraphics[width=0.49\textwidth,height=0.20\textheight]{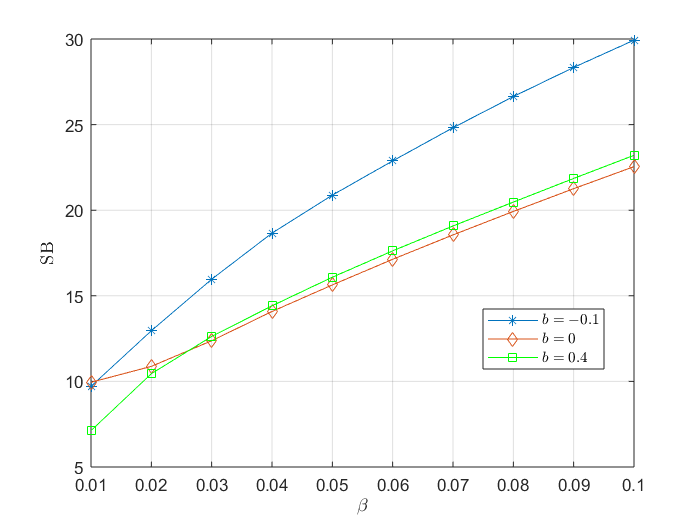}
  \includegraphics[width=0.49\textwidth,height=0.20\textheight]{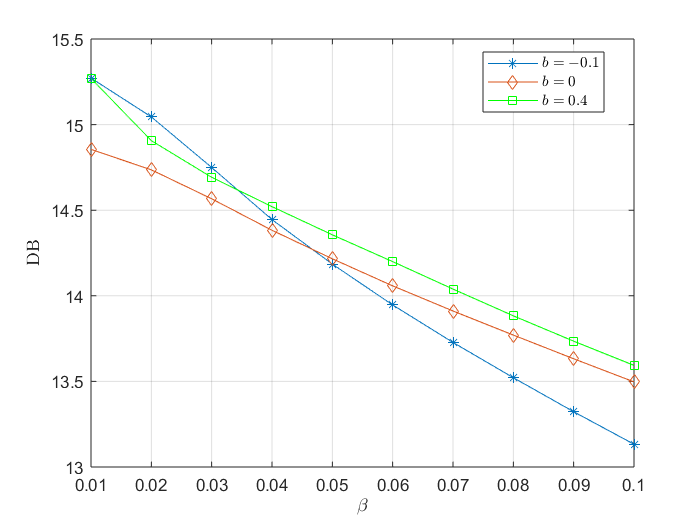}
  \includegraphics[width=0.49\textwidth,height=0.20\textheight]{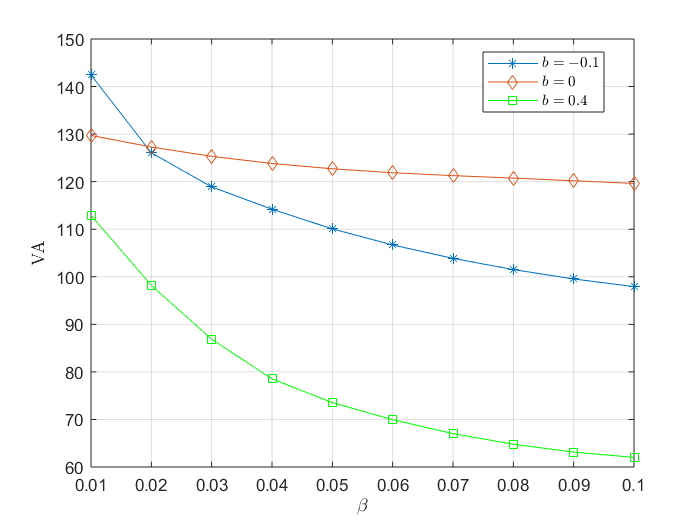}
  \caption{\small Sensitivity Analysis: the dependence parameters $b$ and $\beta$. Top panels: left-hand-side - GMAB; right-hand-side: SB. Bottom panels: left-hand-side - DB; right-hand-side - VA. Maturity: $T=10$ years. Other parameters: Table \ref{tab:NIGParams}. \label{fig:betab}}
\end{figure}
\begin{figure}[tbp]
  \centering
  \includegraphics[width=0.49\textwidth,height=0.20\textheight]{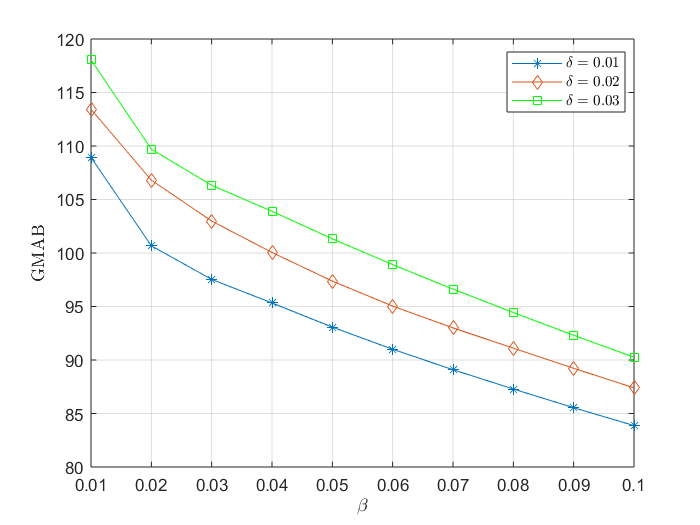}
  \includegraphics[width=0.49\textwidth,height=0.20\textheight]{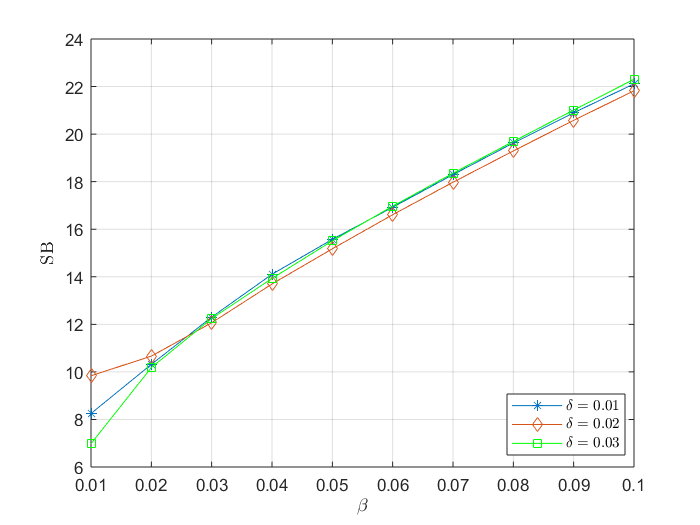}
  \includegraphics[width=0.49\textwidth,height=0.20\textheight]{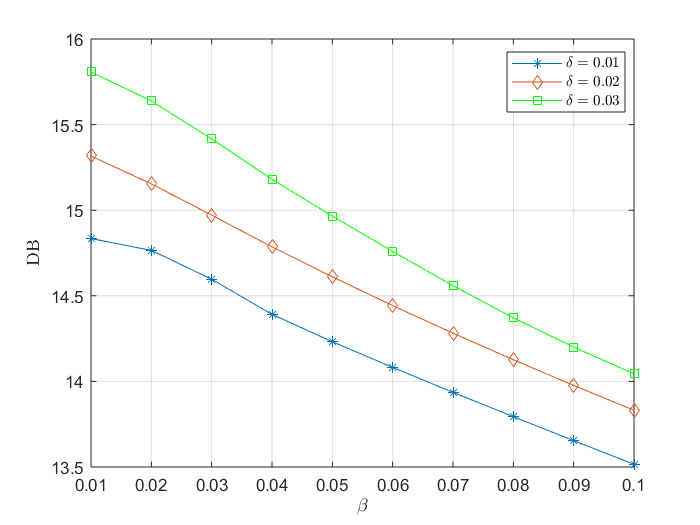}
  \includegraphics[width=0.49\textwidth,height=0.20\textheight]{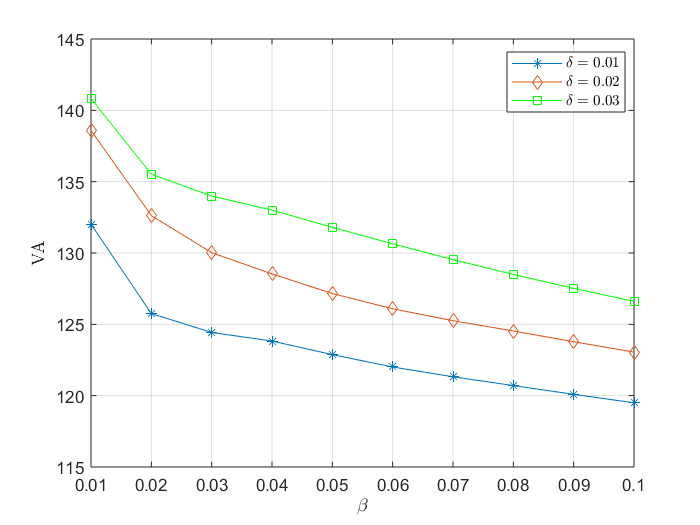}
  \caption{\small Sensitivity Analysis: the guarantee $\delta$ for different values of $\beta$. Top panels: left-hand-side - GMAB; right-hand-side: SB. Bottom panels: left-hand-side - DB; right-hand-side - VA. Maturity: $T=10$ years. Other parameters: Table \ref{tab:NIGParams}. \label{fig:betag}}
\end{figure}

In Figure \ref{fig:betaC} we show the sensitivity of the VA and its components to the parameters $\beta$ and $C$, controlling the surrender intensity. Consistently with intuition, the higher the value of $\beta$, the stronger the influence of the financial market on the policyholder decision to surrender. This is captured in particular by the value of the surrender benefit, SB, which increases with $\beta$. Due to the higher surrender probability, the values of the GMAB and the DB reduce, and so does the resulting value of the VA. 

The impact of the parameter $\beta$ though becomes less relevant in correspondence of higher values of the constant $C$ which represents the baseline surrender behaviour. In this case, the non-economic factors leading to the decision of surrender dominate the influence of the financial market to the point that the VA as well as its components are almost insensitive to $\beta$.

These results highlight that a correct quantification of the baseline surrender parameter $C$ is of paramount importance given the significant impact on the value of the VA and its components. In this respect, it is crucial for insurance companies to correctly calibrate the surrender model to the information that they could collect regarding the policyholder behaviour towards lapses.

Figure \ref{fig:betab} shows that the impact of the parameter $\beta$ is also affected by the dependence between the equity and the fixed income markets, here captured by the parameter $b$. Indeed, in presence of explicit dependence between the financial markets (i.e. $b\neq 0$), the behaviour of the policyholders becomes more responsive to changes in the market conditions. This is reflected in the rate of change of the contract's value.

Finally, in Figure \ref{fig:betag} we illustrate the behaviour of the contract with respect to the guaranteed rate $\delta$. A higher value of this rate corresponds to a more valuable VA, as one would expect; the SB though is relatively insensitive to $\delta$, also due to the low values of the latter used in this analysis.

\section{Conclusion}\label{Conclusion}
We proposed a general framework for the valuation of a number of guarantees commonly included in variable annuity products, such as guaranteed minimum accumulation benefit, death benefit and surrender benefit, in the setting of a hybrid model based on multivariate L\'evy processes, with surrender risk captured endogenously. 

This framework proves to be tractable, and allows for the deployment of efficient numerical schemes based on Monte Carlo integration. We used this setting to gain insights into the contract sensitivity with respect to the model parameters, with special focus on surrender risk. The emphasis on surrender in particular finds its motivation in the large percentage of early terminations experienced by the insurance companies issuing variable annuities, and the resulting net cash outflow they sustain as a consequence (as also pointed out by LIMRA Secure Retirement Institute and IRI). The results obtained in our analysis confirm  the importance of the surrender behaviour and its appropriate modelling.

We envisage the applicability of the model setup offered in this paper, and the proposed numerical scheme, also for other benefits offered by VAs, such as the post-retirement guaranteed minimum income benefit. This would represent a promising route for further research.

\section*{Acknowledgements}
The authors gratefully acknowledge the financial support from the Freiburg Institute for Advanced Studies (FRIAS) within the FRIAS Research Project `Linking Finance and Insurance: Theory and Applications (2017--2019)'.
%\newpage\clearpage

 \begin{appendix}

\section{An alternative specification for the surrender intensity}\label{sec:altsurrender}
  The surrender intensity $\lambda^s$ is defined in \eqref{lambda}  by \begin{equation} 
   \lambda^s(t)=\beta D^2(t_i) +C=W_1(D(t_i)), \:\:\: t_i\leq t < t_{i+1} \label{W1}   
  \end{equation}
  where $W_1(x):=\beta x^2+C$. Consequently $W_1(x)\to \infty$ as $|x|\to \infty$. One might be interested in the behaviour of the contract price for a surrender intensity whose values do not become arbitrarily large. Therefore, in the following we study the valuation of the variable annuity's components under an alternative specification which keeps the surrender intensity bounded.
  
  We start with an observation concerning the behaviour of $D(t_i)$, $i=1,\dots,K-1$. The following lemma shows that these quantities stay in a bounded interval with high probability. Based on this fact, Proposition \ref{Prop:barA1} and its corollary show that the contract values obtained with the alternative surrender intensity approximate the values under the original assumption.

  \begin{lemma}\label{bislambda} For any $\epsilon >0$, there exists $L>0$ such that 
\begin{eqnarray}
&&Q^T(\bigcup_{i=1}^{K-1} \{|D(t_{i})|>L\}) \leq \epsilon, \notag\\
&& Q^{S,j}(\bigcup_{i=1}^{j} \{|D(t_{i})|>L\}) \leq \epsilon, \text{ for any } j\in\{1, \ldots, K-1\},\label{bounded}\\
&& Q^{\bar{t}_{i}}(\bigcup_{l=1}^{j} \{|D(t_{l})|>L\}) \leq \epsilon,\notag
\end{eqnarray}
 where the last result holds for  $j\in\{1,\ldots, K-2\}$ and any $i$  such that $t_j< \bar t_i \le t_{j+1}$, and for $j=K-1$ and any $i$ such that $t_{K-1}< \bar t_i \le T$. $Q^{\bar{t}_{i}}$ is defined in \eqref{second forward measure}, $Q^{S,j}$ is defined in \eqref{bis forward measure}, and $Q^T$ is defined in \eqref{forward measure}.
\end{lemma}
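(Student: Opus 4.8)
The plan is to treat this as a tail estimate for finitely many random variables evaluated under finitely many equivalent measures, and then to conclude by a union bound. First I would recall the representation \eqref{D:appendix}, which expresses each $D(t_i)$ as a deterministic constant $w_i$ (cf. \eqref{def:wk}) plus the stochastic integrals $\int_0^{t_i}\sigma_2(s)\,dL^2_s$ and $\int_0^{t_i}(\beta(s)-\Sigma(s,T))\,dL^1_s$. Since $\sigma_2$, $\beta$ and $\Sigma(\cdot,T)$ are bounded deterministic functions, each $D(t_i)$ is an almost surely finite random variable, and there are only finitely many indices $i\in\{1,\dots,K-1\}$ and finitely many measures to consider, namely $Q^T$, the $Q^{S,j}$ for $j=1,\dots,K-1$, and the $Q^{\bar t_i}$.

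The heart of the argument is a uniform tail bound for $D(t_i)$ under each of these measures, which I would obtain by exhibiting exponential moments of some fixed small order. Each of the three measures is an Esscher-type change whose density is the exponential of a stochastic integral against $L^1$ (and, for $Q^{S,j}$, also against $L^2$); see \eqref{forward measure}, \eqref{bis forward measure} and \eqref{second forward measure}. Writing $E^{\mathbb Q}[e^{\pm\eta D(t_i)}]=E_Q\big[\tfrac{d\mathbb Q}{dQ}\,e^{\pm\eta D(t_i)}\big]$ and collecting the integrands of the resulting single stochastic integral against each driver, the argument of the $L^1$-integral is bounded in absolute value by $\tfrac{M_1}{3}+\eta\,\tfrac{2M_1}{3}$ and that of the $L^2$-integral by $(1+\eta)\tfrac{M_2}{2}$, thanks to \eqref{AssM} and \eqref{assM12}. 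For $\eta>0$ small these stay strictly inside the interval $[-(1+\epsilon_j)M_j,(1+\epsilon_j)M_j]$ of Assumption \ref{ass:EM}, so that \eqref{cumulant} applies and yields $E^{\mathbb Q}[e^{\eta D(t_i)}]<\infty$ together with $E^{\mathbb Q}[e^{-\eta D(t_i)}]<\infty$.

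With these moments in hand, I would apply the exponential Chebyshev inequality $\mathbb Q(|D(t_i)|>L)\le e^{-\eta L}\big(E^{\mathbb Q}[e^{\eta D(t_i)}]+E^{\mathbb Q}[e^{-\eta D(t_i)}]\big)$, whose right-hand side tends to $0$ as $L\to\infty$ at a rate that is uniform over the finitely many pairs of measure and index. Choosing $L$ large enough that every such probability lies below $\epsilon/(K-1)$ and taking the union bound over the at most $K-1$ events appearing in each of the three displayed unions then delivers all three inequalities simultaneously for the same $L$.

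The main obstacle is the second step: verifying that after each measure change the argument of the stochastic integrals still falls within the exponential-moment window. This is precisely the role of the a priori bounds \eqref{AssM} and \eqref{assM12} together with the slack $\epsilon_j$ in Assumption \ref{ass:EM}. The bookkeeping of the Esscher shifts is routine but must be carried out separately for the three numeraires, since $Q^T$ and $Q^{\bar t_i}$ shift the characteristics of $L^1$ alone while the spot measure $Q^{S,j}$ shifts those of both $L^1$ and $L^2$, by different amounts. Once these shifts are checked to respect the moment window, the remainder is a standard finite union bound.
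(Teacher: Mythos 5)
Your proposal is correct and takes essentially the same route as the paper's proof: an exponential Markov (Chernoff) bound plus a finite union bound, with the required exponential moments $E^{\mathbb{Q}}\big[e^{\pm\eta D(t_i)}\big]$ shown finite by unwinding each Radon--Nikodym density against $Q$, collecting the integrands against $L^1$ and $L^2$, and checking via \eqref{AssM} and \eqref{assM12} that they lie in the window of Assumption \ref{ass:EM} so that \eqref{cumulant} applies. The only cosmetic difference is that the paper works with exponent $\eta=1$ (via $e^{|x|}\le e^{x}+e^{-x}$), which the same bookkeeping shows is already admissible, whereas you retain a small $\eta>0$.
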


\begin{proof}
%[Proof of Lemma \ref{bislambda}]

Let $L_1$ denote a positive constant. Applying the Markov inequality  and the inequality $e^{|x|}\leq e^x+e^{-x}$, we get
\begin{eqnarray}
Q^T(\bigcup_{i=1}^{K-1} \{|D(t_{i})|>L_1\})&\leq& \sum_{i=1}^{K-1} Q^T( \{|D(t_{i})|>L_1\})=\sum_{i=1}^{K-1} Q^T( \{e^{|D(t_{i})|}>e^{L_1}\})\notag\\
&\leq& e^{-L_1} \sum_{i=1}^{K-1} \big[ E^T(e^{D(t_{i})}) + E^T(e^{-D(t_{i})})\big].\label{negligible}
\end{eqnarray}
Now we use the representation given in \eqref{D:appendix}, the definition of $Q^T$ in \eqref{forward measure} and  \eqref{Bank ac}, to derive
\begin{eqnarray*}
E^T(e^{D(t_{i})}) &=& e^{-p(t_{i}) - \delta T+ \int_0^T f(0,s)ds + \int_0^{t_{i}}A(s,T)ds -\omega(t_{i})}E^T(e^{\int_0^{t_{i}} \sigma_2(s)dL^2_s+\int_0^{t_{i}}(\beta(s)-\Sigma(s,T)dL^1_s})\\
&=& e^{-p(t_{i}) - \delta T+ \int_0^T f(0,s)ds + \int_0^{t_{i}}A(s,T)ds -\omega(t_{i})-\int_0^T A(s,T)ds}\\
&& \times E_Q(e^{\int_0^{t_{i}} \sigma_2(s)dL^2_s+\int_0^{t_{i}}(\beta(s)-\Sigma(s,T)dL^1_s+\int_0^T\Sigma(s,T)dL^1_s})
\end{eqnarray*}
Observe that, by independence of $L^1$ and $L^2$, we have
\begin{eqnarray*}
&&E_Q(e^{\int_0^{t_{i}} \sigma_2(s)dL^2_s+\int_0^{t_{i}}(\beta(s)-\Sigma(s,T)dL^1_s+\int_0^T\Sigma(s,T)dL^1_s})\\
&&=E_Q(e^{\int_0^{t_{i}} \sigma_2(s)dL^2_s+\int_0^T((\beta(s)-\Sigma(s,T))\ind{0\leq s \leq t_{i}}+ \Sigma(s,T))dL^1_s})\\
&& = E_Q(e^{\int_0^{t_{i}} \sigma_2(s)dL^2_s}) E_Q(e^{\int_0^T((\beta(s)-\Sigma(s,T))\ind{0\leq s \leq t_{i}}+ \Sigma(s,T))dL^1_s}),
\end{eqnarray*}
where the last quantities are finite due to \eqref{AssM} and \eqref{assM12}. Therefore, $C_i:=E^T(e^{D(t_{i})})$ is finite, and by similar arguments $C_i':=E^T(e^{-D(t_{i})})$ is finite as well. Observe that $e^{-L_1}\sum_{i=1}^{K-1}(C_i+C_i') \leq \epsilon$ is equivalent to $L_1\geq -\log(\epsilon)+ \log(\sum_{i=1}^{K-1}(C_i+C_i'))$. As a consequence of \eqref{negligible} and the last argument, we deduce that for $L_1\geq -\log(\epsilon)+ \log(\sum_{i=1}^{K-1}(C_i+C_i'))$, $Q^T(\bigcup_{i=1}^{K-1} \{|D(t_{i})|>L_1\})\leq \epsilon$. 

By similar arguments, we can prove that for any $j\in \{1, \ldots, K-1\}$, there exists $\hat L_j>0$, such that $Q^{S,j}(\bigcup_{i=1}^{j} \{|D(t_{i})|> \hat L_{j}\})\leq \epsilon$. We can prove also that for  $j\in\{1,\ldots, K-2\}$ and any $i$  such that $t_j< \bar t_i \le t_{j+1}$, and for $j=K-1$ and any $i$ such that $t_{K-1}< \bar t_i \le T$, there exists $L_{i,j}$ such that $Q^{\bar{t}_{i}}(\bigcup_{l=1}^{j} \{|D(t_{l})|> L_{i,j}\}) \leq \epsilon$. Therefore, by defining $\bar L:=\max_{(i,j)} L_{i,j}$ and $L:=\max(L_1,\hat L_1, \ldots, \hat L_{K-1}, \bar{L})$, we deduce that \eqref{bounded} holds true.
\end{proof}

\begin{remark}\label{small constant}
The constant $L$ in Lemma \ref{bislambda} depends on $\epsilon$ and should be denoted $L_{\epsilon}$ but we omit this notation for simplicity. 
\end{remark}

Based on the above, let us replace $W_1$ by another positive and continuous function $W_2$ which coincides with $W_1$ on $[-L,L]$, $L$ as in Lemma \ref{bislambda}, is constant outside this compact set on the positive half line and converges to 0 as $x\to - \infty$. Specifically, we define $W_2$ as
\begin{equation*}
    W_2(x)=\left\{\begin{array}{ll}
        W_1(L)  & \text{for } x\in (L, \infty)\\
         W_1(x) & \text{for } x\in [-L,L]\\
         (\beta L^2+C)e^{L+x} & \text{for } x\in (-\infty,- L)
    \end{array}\right.,
\end{equation*} 
with the constants $ \beta$ and $C$ as in \eqref{lambda}. The new surrender intensity $\bar{\lambda}^s$ is defined as
\begin{equation}
\bar{\lambda}^s(t)= W_2(D(t_i)), \:\: t_i \leq t < t_{i+1}, \label{bar:lambda}
\end{equation}
for $i\in\{1, \ldots K-1\}$,  and $\bar{\lambda}^s(t)=0$ for $t\in [0, t_1) \cup [t_K, T]$.
 Then, as we will see in  Corollary \ref{bis pricing}, the corresponding prices, denoted by ${\rm \bar{P}}^{\rm GMAB}, \bar{{\rm P}}^{\rm DB}$ and  $\bar{{\rm P}}^{\rm SB}$, approximate ${\rm P}^{\rm GMAB}, {\rm P}^{\rm DB}$ and  ${\rm P}^{\rm SB}$. 
 
Starting with the value of the GMAB, we rewrite equation \eqref{A1A2} as
\begin{eqnarray*}
  \frac{{\rm \bar{P}}^{\rm GMAB}}{Q(\tau^m(x) >T)B(0,T)G(T)}&=&  
  E^T\bigg[ e^{-\int_0^{t_K}\bar{\lambda}^s(u)du}\bigg]
  + E^T\bigg[ e^{-\int_0^{t_K}\bar{\lambda}^s(u)du}  \Big(\frac{IS_T}{G(T)}-1\Big)^+ \bigg] \\[2mm]
    &=:& \bar{A}_1 + \bar{A}_2,
  \end{eqnarray*}
  with 
\begin{eqnarray*}
 \bar{A}_1 &=&  E^T\bigg[\prod_{i=2}^K \big(e^{-W_2(D(t_{i-1}))\Delta t_i}\big)\bigg],\\
 \bar{A_2} &=&   E^T\bigg[\prod_{i=2}^K \big(e^{-W_2(D(t_{i-1}))\Delta t_i}\big)\Big(e^{D(T)}-1\Big)^+\bigg].
 \end{eqnarray*}
 
Following the same calculations as in Theorem \ref{PDB}, we deduce that
 \begin{eqnarray*} 
\bar{{\rm P}}^{\rm DB} &=& \sum_{i:\: \bar t_i \le t_1}Q( \tau^m(x) \in [\bar{t}_{i-1}, \bar{t}_i))( G(\bar t_i)B(0, \bar t_i) + G(\bar t_i)B(0, \bar t_i) A_{0,i})\\
&& + \sum_{j=1}^{K-2} \sum_{i:\:\bar t_i\in(t_j,t_{j+1}]} Q\big( \tau^m(x) \in [\bar{t}_{i-1}, \bar{t}_i)\big)G(\bar t_i)B(0,\bar t_i)\big(\bar A^1_{j,i} + \bar A^2_{j,i}\big)\\
&& + \sum_{i:\:\bar t_i\in(t_{K-1},T]} Q\big( \tau^m(x) \in [\bar{t}_{i-1}, \bar{t}_i)\big)G(\bar t_i)B(0,\bar t_i)\big(\bar A^1_{K-1,i} + \bar A^2_{K-1,i}\big),
\end{eqnarray*}
where, $A_{0,i}$ is the same as in Theorem \ref{PDB}, and  for  $j\in\{1,\ldots, K-2\}$ and any $i$  such that $t_j< \bar t_i \le t_{j+1}$, and for $j=K-1$ and any $i$ such that $t_{K-1}< \bar t_i \le T$
 \begin{eqnarray*}
 \bar A^1_{j,i} &:=& E^{\bar t_i}\Big[ e^{-\int_0^{t_{j+1}} \bar \lambda^s(u) du}\Big] = E^{\bar t_i}\Big[ \prod_{i=2}^{j+1}e^{-W_2(D(t_{i-1})\Delta t_i}\Big], \\
 \bar A^2_{j,i} &:=& E^{\bar t_i}\Big[ e^{-\int_0^{t_{j+1}} \bar \lambda^s(u) du}\Big(\frac{IS_{\bar t_i}}{G(\bar t_i)}-1\Big)^+ \Big]\bigg)=E^{\bar t_i}\Big[ \prod_{i=2}^{j+1}e^{-W_2(D(t_{i-1})\Delta t_i}\Big(e^{D_{\bar t_i,\bar t_i} + p(\bar t_i)}-1\Big)^+ \Big].
 \end{eqnarray*}

Finally, by the same argument as in Theorem \ref{surrender}, we deduce that
 \begin{eqnarray*}
    \bar{{\rm P}}^{\rm SB} &=& I \sum_{j=1}^{K-1} P(t_{j}) Q(\tau^m(x) > t_{j}) E_{Q^{S,j}}\bigg[e^{-\int_0^{t_{j}} \bar{\lambda}^s(u)du}  \bigg]\notag\\
    && - I \sum_{j=1}^{K-1} P(t_{j}) Q(\tau^m(x) > t_{j}) E_{Q^{S,j}}\bigg[e^{-\int_0^{t_{j+1}} \bar{\lambda}^s(u)du}  \bigg]\notag\\
    &=& I \sum_{j=1}^{K-1} P(t_{j}) Q(\tau^m(x) > t_{j})(\bar{B}_j^{1} - \bar{B}_j^{2}),
  \end{eqnarray*}
 with $\bar{B}_1^{1}=1$, due to $\bar{\lambda}^s(u)=0$, for $u\in[0, t_1)$ by construction, and
 \begin{eqnarray*}
 \bar{B}_j^{1} &=&  E_{Q^{S,j}}\bigg[\prod_{l=2}^{j} \big(e^{-  W_2(D(t_{l-1}))\Delta t_l}\big)\bigg], \text{ for } j\geq 2, \notag\\ 
 \bar{B}_j^{2}&=&  E_{Q^{S,j}}\bigg[\prod_{l=2}^{j+1} \big(e^{- W_2(D(t_{l-1}))\Delta t_l}\big)\bigg].
 \end{eqnarray*}
 
Given the above, the following result holds.
\begin{proposition} \label{Prop:barA1} We have
\begin{itemize}
    \item[(1)] $|\bar{A}_1 - A_1| \leq 2 \epsilon$,\\
    \item[(2)] $|\bar{A}_2 -A_2|\leq 2 C_2  \epsilon^{1/2}$, with $C_2:=E^T\big[ (e^{D(T)}-1)^2\big]^{1/2}$,\\
    \item[(3)] $|\bar{A}^1_{j,i} - A^1_{j,i}| \leq 2 \epsilon$,\\
    \item[(4)] $|\bar{A}^2_{j,i} -A^2_{j,i}|\leq 2 C_{2,i}  \epsilon^{1/2}$, with $C_{2,i}:=E^{\bar t_i}\big[ (e^{D_{\bar t_i, \bar t_i} +p(\bar t_i)}-1)^2\big]^{1/2}$,\\
   \item[] where in  (3) and (4) we have either  $j\in\{1,\ldots, K-2\}$ and  $i$  such that $t_j< \bar t_i \le t_{j+1}$, or $j=K-1$ and  $i$ such that $t_{K-1}< \bar t_i \le T$,\\
    \item[(5)] $|\bar{B}_j^{1} - B_j^{1}| \leq 2 \epsilon $, for $j \in\{2, \ldots, K-1\}$,\\
    \item[(6)] $|\bar{B}_j^{2} - B_j^{2}| \leq 2 \epsilon$, for $j \in\{1, \ldots, K-1\}$.
    \end{itemize}
$A_1$, $A_2$ are given in \eqref{A1A2}, $A^1_{j,i}$, $A^2_{j,i}$ in \eqref{A1A2'}, and $B_j^{1}$, $B_j^{2}$ are given in \eqref{B^1} and \eqref{B^2}.
\end{proposition}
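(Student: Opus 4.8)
The plan is to prove all six estimates from a single template, exploiting the fact that each barred quantity differs from its unbarred counterpart only through the replacement of $W_1$ by $W_2$, and that these two functions coincide on $[-L,L]$. Writing out the integrated intensities, note that $A_1=E^T\big[\prod_{l=2}^K e^{-W_1(D(t_{l-1}))\Delta t_l}\big]$ while $\bar A_1=E^T\big[\prod_{l=2}^K e^{-W_2(D(t_{l-1}))\Delta t_l}\big]$, and analogously for the other pairs; in each case the two expectations have the \emph{same} integrand except for the substitution $W_1\leftrightarrow W_2$ in the product of exponentials. On the ``good'' event $B:=\bigcap_l\{|D(t_l)|\le L\}$ (with the index range dictated by the part in question) we have $W_1(D(t_l))=W_2(D(t_l))$ for every relevant $l$, so the two integrands agree there and the entire difference is supported on the ``bad'' event $B^c=\bigcup_l\{|D(t_l)|>L\}$. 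This is exactly where the three lines of Lemma~\ref{bislambda} enter: $Q^T(B^c)\le\epsilon$ for parts (1)--(2), $Q^{\bar t_i}(B^c)\le\epsilon$ for parts (3)--(4), and $Q^{S,j}(B^c)\le\epsilon$ for parts (5)--(6).

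Next I would supply the pointwise bound on $B^c$. Since $W_1\ge C\ge 0$ and $W_2\ge 0$, every factor $e^{-W_k(D(t_{l-1}))\Delta t_l}$ lies in $(0,1]$, hence both products lie in $(0,1]$ and their difference is bounded in absolute value by $2$. For the four estimates carrying no option payoff, namely (1), (3), (5) and (6), this yields directly
\[
  |\bar A_1 - A_1| \le 2\,Q^T\Big(\bigcup_{i=1}^{K-1}\{|D(t_i)|>L\}\Big)\le 2\epsilon ,
\]
and identically for (3), (5) and (6) after replacing $Q^T$ by the appropriate measure and the index range by the one prescribed by the corresponding line of Lemma~\ref{bislambda} (for (6) one uses the upper index $j+1$, but the absence of a payoff factor keeps the bound at $2\epsilon$).

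For parts (2) and (4) the integrand additionally carries the factor $(e^{D(T)}-1)^+$, respectively $\big(e^{D_{\bar t_i,\bar t_i}+p(\bar t_i)}-1\big)^+$. Here, after restricting to $B^c$ and using the pointwise bound $2$ on the difference of products, I would invoke Cauchy--Schwarz together with $((e^{D(T)}-1)^+)^2\le (e^{D(T)}-1)^2$:
\[
  |\bar A_2 - A_2| \le 2\,E^T\big[(e^{D(T)}-1)^+\Ind_{B^c}\big]
  \le 2\,E^T\big[(e^{D(T)}-1)^2\big]^{1/2} Q^T(B^c)^{1/2}\le 2 C_2\,\epsilon^{1/2},
\]
and part (4) is obtained verbatim under $Q^{\bar t_i}$ with $C_{2,i}$ in place of $C_2$.

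The only point requiring genuine care is the finiteness of the constants $C_2$ and $C_{2,i}$, i.e.\ the existence of the second exponential moment of $e^{D(T)}$. I would establish this exactly as in the proof of Theorem~\ref{prop:A1}: expanding $(e^{D(T)}-1)^2=e^{2D(T)}-2e^{D(T)}+1$ reduces the claim to $E^T[e^{2D(T)}]<\infty$, and passing back to $Q$ via the density \eqref{forward measure} reveals the relevant coefficients to be $2\beta(s)-\Sigma(s,T)$ driving $L^1$ and $2\sigma_2(s)$ driving $L^2$ (the case $r=2$ of the check carried out there). By \eqref{AssM} and \eqref{assM12} we have $|2\beta(s)-\Sigma(s,T)|\le 2\tfrac{M_1}{3}+\tfrac{M_1}{3}=M_1$ and $2\sigma_2(s)\le M_2$, both lying inside $[-(1+\epsilon_j)M_j,(1+\epsilon_j)M_j]$, so Assumption~\ref{ass:EM} guarantees the moment is finite; the same argument with $\bar t_i$ replacing $T$ (and the deterministic shift $p(\bar t_i)$) handles $C_{2,i}$. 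With these finiteness statements in hand, all six inequalities follow from the good/bad-event split above, and the stated identifications of $A_1,A_2$ with \eqref{A1A2}, of $A^1_{j,i},A^2_{j,i}$ with the quantities of Theorem~\ref{PDB}, and of $B^1_j,B^2_j$ with \eqref{B^1}--\eqref{B^2}, complete the proof.
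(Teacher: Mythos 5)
Your proof is correct and follows essentially the same route as the paper's: decompose each difference over the event $\bigcap_l\{|D(t_l)|\le L\}$ (where $W_1=W_2$) and its complement, invoke Lemma~\ref{bislambda} for the probability of the bad event under the appropriate measure, bound the exponential factors by $1$, and apply Cauchy--Schwarz for the payoff terms in (2) and (4). The only difference is presentational — the paper works out (2) as the representative case and declares the rest mutatis mutandis, whereas you give a unified template and, helpfully, spell out the finiteness of $C_2$ and $C_{2,i}$ via the $r=2$ moment check that the paper merely asserts.
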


\begin{proof}
%[Proof of Proposition \ref{Prop:barA1}]
It suffices to prove (2) which is representative for the degree of sophistication. The proofs of (1), (3), (4), (5) and (6) follow mutatis mutandis. From the definition of $W_1$ given in \eqref{W1}, we derive 
\begin{eqnarray*}
  \bar{A}_2&=& E^T\bigg[\prod_{i=2}^K \big(e^{-W_2(D(t_{i-1}))\Delta t_i}\big)\Big(e^{D(T)}-1\Big)^+\bigg]\\
& =& E^T\bigg[\prod_{i=2}^K \big(e^{-W_2(D(t_{i-1}))\Delta t_i}\big)\Big(e^{D(T)}-1\Big)^+ \ind{\cap_{i=2}^{K} \{|D(t_{i-1})| \leq L\}}\bigg]\\
&& + \; E^T\bigg[\prod_{i=2}^K \big(e^{-W_2(D(t_{i-1}))\Delta t_i}\big)\Big(e^{D(T)}-1\Big)^+ \ind{\cup_{i=2}^{K} \{|D(t_{i-1})| > L\}}\bigg]\\
&=& E^T\bigg[\prod_{i=2}^K \big(e^{-W_1(D(t_{i-1}))\Delta t_i}\big)\Big(e^{D(T)}-1\Big)^+ \ind{\cap_{i=2}^{K} \{|D(t_{i-1})| \leq L\}}\bigg]\\
&& + \; a_1,
\end{eqnarray*}
with the obvious definition of $a_1$ in the last line. The third equality holds as $W_1$ and $W_2$ coincide on $[-L, L]$. Observe that
\begin{eqnarray*}
&& E^T\bigg[\prod_{i=2}^K \big(e^{-W_1(D(t_{i-1}))\Delta t_i}\big)\Big(e^{D(T)}-1\Big)^+ \ind{\cap_{i=2}^{K} \{|D(t_{i-1})| \leq L\}}\bigg]\\
&&= E^T\bigg[\prod_{i=2}^K \big(e^{-W_1(D(t_{i-1}))\Delta t_i}\big)\Big(e^{D(T)}-1\Big)^+\bigg]\\
&&\quad -\; E^T\bigg[\prod_{i=2}^K \big(e^{-W_1(D(t_{i-1}))\Delta t_i}\big)\Big(e^{D(T)}-1\Big)^+ \ind{\cup_{i=2}^{K} \{|D(t_{i-1})| > L\}}\bigg]\\
&&=  A_2 - a_2,
\end{eqnarray*}
with the obvious notation in the last line. We deduce that
\[
|\bar{A}_2 - A_2| \leq (a_1 +a_2).
\]
As $W_1$ is a positive function, the following holds
\begin{eqnarray*}
 a_2 &\leq& E^T\bigg[\Big(e^{D(T)}-1\Big)^+ \ind{\cup_{i=1}^{K-1} \{|D(t_{i})| > L\}}\bigg]\\
&\leq& E^T\bigg[\Big|e^{D(T)}-1\Big| \ind{\cup_{i=1}^{K-1} \{|D(t_{i})| > L\}}\bigg]\\
&\leq& E^T\bigg[ (e^{D(T)}-1)^2\bigg]^{1/2}Q^T(\bigcup_{i=1}^{K-1} \{|D(t_{i})|>L\})^{1/2}\\
& \leq& C_2 \epsilon^{1/2},
\end{eqnarray*}
where the third line is a consequence of the Cauchy-Schwarz inequality. For the last inequality we use \eqref{bounded} together with the fact that $C_2:=E^T\bigg[ (e^{D(T)}-1)^2\bigg]^{1/2}< \infty$. This last fact is easy to verify from the definition of $D(T)$ in \eqref{D:appendix}. A similar bound can be achieved for $a_1$, therefore 
\[
|\bar{A}_2 - A_2| \leq 2  C_2 \epsilon^{1/2},
\]
which proves (2). 
\end{proof}

Based on the last proposition, the interested reader can prove the following corollary.
\begin{corollary}\label{bis pricing} We have

\begin{itemize}
    \item[(1)] $|{\rm \bar{P}}^{\rm GMAB} -{\rm P}^{\rm GMAB}|\leq 2  Q(\tau^m(x) >T)B(0,T)G(T)[\epsilon + C_2 \epsilon^{1/2}],$ \\
    with $C_2:=E^T\big[ (e^{D(T)}-1)^2\big]^{1/2}.$\\
    
    \item[(2)] $|{\rm \bar{P}}^{\rm DB} -{\rm P}^{\rm DB}|\leq 2\sum_{i:\: \bar t_i > t_1} Q\big( \tau^m(x) \in [\bar{t}_{i-1}, \bar{t}_i)\big)B(0,\bar{t}_i)G(\bar{t}_i)[\epsilon + C_{2,i} \epsilon^{1/2}]$,\\
    with $C_{2,i}:= E^{\bar{t}_i}\big[ (e^{D_{\bar{t}_i,\bar{t}_i}   +p(\bar{t}_i)}-1)^2\big]^{1/2}$.\\
    
    \item[(3)] $|{\rm \bar{P}}^{\rm SB} -{\rm P}^{\rm SB}|\leq 2  I P(t_1)Q(\tau^m(x) > t_{1})\epsilon +4 I \epsilon \sum_{j=2}^{K-1} P(t_j) Q(\tau^m(x) > t_{j}).$
\end{itemize}
\end{corollary}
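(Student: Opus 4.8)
The plan is to obtain all three estimates directly from the component-wise bounds in Proposition \ref{Prop:barA1} by substituting them into the closed-form price representations of Theorems \ref{prop:A1}, \ref{PDB} and \ref{surrender} and invoking the triangle inequality. The key structural fact is that each price is an affine combination, with nonnegative deterministic weights (built from survival probabilities, discount factors $B(0,\cdot)$, the guarantee $G$ and the notional $I$), of the quantities $A_1,A_2$, $A^1_{j,i},A^2_{j,i}$ and $B^1_i,B^2_i$; the bounded-intensity prices $\bar{\rm P}$ are the same combinations of the barred quantities. Hence each price difference collapses to a weighted sum of the six differences controlled in Proposition \ref{Prop:barA1}.

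For (1), I would write ${\rm P}^{\rm GMAB}=Q(\tau^m(x)>T)B(0,T)G(T)(A_1+A_2)$ and the identical expression for $\bar{\rm P}^{\rm GMAB}$ with $\bar A_1+\bar A_2$; subtracting and applying $|\bar A_1-A_1|\le 2\epsilon$ together with $|\bar A_2-A_2|\le 2C_2\epsilon^{1/2}$ yields the bound at once. For (2), I would subtract the two DB representations term by term. The crucial observation is that the summands indexed by $\bar t_i\le t_1$ are identical in ${\rm P}^{\rm DB}$ and $\bar{\rm P}^{\rm DB}$: on $[0,t_1)$ the surrender intensity vanishes by construction, so $W_1$ and $W_2$ are irrelevant there and the coefficient $A_{0,i}$ is unchanged. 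These terms cancel, leaving only the sums over $\bar t_i> t_1$; on each I would bound $|(\bar A^1_{j,i}+\bar A^2_{j,i})-(A^1_{j,i}+A^2_{j,i})|\le 2\epsilon+2C_{2,i}\epsilon^{1/2}$, factor out the common weight $Q(\tau^m(x)\in[\bar t_{i-1},\bar t_i))G(\bar t_i)B(0,\bar t_i)$, and sum.

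For (3), I would expand $\bar{\rm P}^{\rm SB}-{\rm P}^{\rm SB}=I\sum_{i=1}^{K-1}P(t_i)Q(\tau^m(x)>t_i)((\bar B^1_i-B^1_i)-(\bar B^2_i-B^2_i))$. Here the index $i=1$ must be isolated: since $B^1_1=\bar B^1_1=1$ exactly, only the $B^2$ difference contributes, producing the single factor $2\epsilon$ and hence the separate term $2IP(t_1)Q(\tau^m(x)>t_1)\epsilon$. For $i\ge 2$ both $|\bar B^1_i-B^1_i|$ and $|\bar B^2_i-B^2_i|$ are at most $2\epsilon$, giving the factor $4I\epsilon$ in the remaining sum.

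The work here is bookkeeping rather than analysis, so the ``main obstacle'' is only to keep the index ranges straight: correctly spotting the cancellation of the $\bar t_i\le t_1$ block in the DB, where the two intensities coincide, and separating the $i=1$ term in the SB, where $B^1_1=\bar B^1_1$ forces the sharper factor $2\epsilon$ in place of $4\epsilon$. Once these are noted, everything follows from the triangle inequality and Proposition \ref{Prop:barA1}, the finiteness of $C_2$ and $C_{2,i}$ being already guaranteed there.
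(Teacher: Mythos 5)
Your proposal is correct and is exactly the argument the paper intends: the paper omits the proof, stating only that it follows from Proposition \ref{Prop:barA1}, and your bookkeeping—substituting the component bounds into the price representations, cancelling the $\bar t_i \le t_1$ block of the DB (where both intensities vanish so $A_{0,i}$ is common), and isolating the $i=1$ term of the SB where $B^1_1=\bar B^1_1=1$—fills in precisely that gap. The resulting constants ($2\epsilon + 2C_2\epsilon^{1/2}$, $2\epsilon+2C_{2,i}\epsilon^{1/2}$, and $2\epsilon$ versus $4\epsilon$ in the SB) match the stated bounds exactly.
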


\section{Some useful results and representations}\label{app:prelim}
 
 Let us recall that $A(u,T)= \int_u^T \alpha(u,s)ds$ and $\Sigma(u,T)= \int_u^T \sigma_1(u,s)ds$. 
 We derive here a representation result in the L\'evy forward rate framework.
  \begin{lemma} For any $0\leq t\leq T$, we have that\label{lem:HJM}
  $$
    -\int_{t}^T f(t,s)ds = \int_0^{t}r(s)ds -\int_0^T f(0,s)ds - \int_0^{t}A(u,T)du + \int_0^{t}\Sigma(u,T)dL^1_u,
  $$
  where $f(t,s)$ is defined in \eqref{forward rate}.
  \end{lemma}
 
 \begin{proof}
 %[Proof of Lemma \ref{lem:HJM}]
Using Fubini's theorem for stochastic integrals, we deduce that
 \begin{eqnarray*}
 \lefteqn{ -\int_{t}^T f(t,s)ds = \int_0^{t}r(s)ds -\int_0^{t} r(s)ds - \int_{t}^T f(0,s)ds - \int_{t}^T \int_0^{t}\alpha(u,s)du\, ds} \hspace{5mm}\\
 && +\int_{t}^T\int_0^{t}\sigma_1(u,s)dL^1_uds\\
 &=& \int_0^{t}r(s)ds -\Big[\int_0^{t}f(0,s)ds + \int_0^{t}(\int_0^s \alpha(u,s)du)ds - \int_0^{t}(\int_0^s \sigma_1(u,s)dL^1_u)ds\Big] \\
 && - \int_{t}^T f(0,s)ds - \int_0^{t}\int_{t}^T \alpha(u,s)ds\, du +\int_0^{t}\int_{t}^T\sigma_1(u,s) ds\,dL^1_u\\
 &=& \int_0^{t}r(s)ds  -\Big[\int_0^{t}f(0,s)ds + \int_0^{t}\int_u^{t} \alpha(u,s)ds\,du - \int_0^{t}\int_u^{t} \sigma_1(u,s)ds\,dL^1_u\Big] \\
 && - \int_{t}^T f(0,s)ds - \int_0^{t}\int_{t}^T \alpha(u,s)ds\,du +\int_0^{t}\int_{t}^T\sigma_1(u,s) ds\,dL^1_u,
 \end{eqnarray*}
 and the claim follows.
  \end{proof}

The above lemma allows the following representation of $D(t)$ defined in \eqref{Dt}.
\begin{align}
     D(t)&= \int_0^t r(s)ds + \int_0^t \sigma_2(s)dL^2_s + \int_0^t\beta(s)dL^1_s -\omega(t) -p(t) + \int_{t}^Tf(t,s)ds - \delta T  \notag \\
     &=  -p(t) - \delta T+ \int_0^T f(0,s)ds + \int_0^{t}A(s,T)ds - \int_0^{t}\Sigma(s,T)dL^1_s \label{D:appendix}\\
     & \quad +\int_0^{t} \sigma_2(s)dL^2_s + \int_0^{t}\beta(s)dL^1_s -\omega(t). \notag
 \end{align}

In addition, setting $t=T$ in the above lemma, we obtain  
  \begin{equation}
    0 = \int_0^{t}r(s)ds -\int_0^t f(0,s)ds - \int_0^{t}A(s,t)ds + \int_0^{t}\Sigma(s,t)dL^1_s \label{interest}
  \end{equation}
and hence, the bank account has the following well-known representation, see for example \cite{EberleinRaible99} or (13) in \cite{ER},
  \begin{equation}
  B(t) = \frac{1}{B(0,t)}\exp\Big( \int_0^t A(s,t)ds - \int_0^t \Sigma(s,t)dL^1_s \Big). \label{Bank ac}
  \end{equation}  

\section{Proof of Theorem \ref{PDB}}\label{app:proofDB}

\begin{proof}
  By definition, 
  \begin{align*}
      {\rm P}^{\rm DB} &= \sum_{i=1}^N  E_{Q}\bigg[e^{-\int_0^{\bar t_i} r(u)du} {\rm DB}(\bar t_i)\bigg].
  \end{align*}
  From equation \eqref{mortality indep}, we obtain that
  \begin{align*}
    E_{Q}\bigg[e^{-\int_0^{\bar t_i} r(u)du} {\rm DB}(\bar t_i)\bigg]
     &=  Q( \tau^m(x) \in [\bar{t}_{i-1}, \bar{t}_i))E_{Q}\bigg[e^{-\int_0^{\bar{t}_i} r(u)du} \ind{ \tau^s \geq \bar{t}_i}\max(I S(\bar{t}_i), G(\bar{t}_i))\bigg]. 
%  &= \sum_{i=1}^N \Q( \tau^m(x) \in (\bar{t}_{i-1}, \bar{t}_i]) \cdot \frac{{\rm P}^{\rm GMBA}(\bar{t}_i)}{\Q( \tau^m(x)> \bar{t}_{i} )}.
  \end{align*}
  We distinguish two cases: when $\bar t_i \le t_1$ and when $t_1 < \bar t_i$. We start with the detailed description of the second case which is the more involving one. The first case is treated at the end.
    
For $j\in \{1,\ldots, K-2\}$ and $i$ such that $t_j < \bar t_i \le t_{j+1}$, as well as for $j=K-1$ and  $i$ such that $t_{K-1} < \bar t_i\leq T$, we work along the same line as in the proof of  Theorem \ref{prop:A1}, and get 
\begin{eqnarray}
  && E_{Q}\bigg[e^{-\int_0^{\bar{t}_i} r(u)du} \ind{ \tau^s \geq \bar{t}_i}\max(I S(\bar{t}_i), G(\bar{t}_i))\bigg]\notag\\
  &&=E_{Q}\bigg[e^{-\int_0^{\bar{t}_i} r(u)du} \ind{ \tau^s \geq t_{j+1}}\max(I S(\bar{t}_i), G(\bar{t}_i))\bigg]\notag\\
  &&=E_{Q}\bigg[e^{-\int_0^{\bar{t}_i} r(u)du} e^{-\int_0^{t_{j+1}} \lambda^s(u) du}\max(I S(\bar{t}_i), G(\bar{t}_i))\bigg]\notag\\
  &&= G(\bar t_i)E_{Q}\bigg[e^{-\int_0^{\bar{t}_i} r(u)du} e^{-\int_0^{t_{j+1}} \lambda^s(u) du}\Big
  (1+ \Big(\frac{IS_{\bar t_i}}{G(\bar t_i)}-1\Big)^+\Big) \bigg].    \label{DB1}
  \end{eqnarray}
  We introduce the $\bar t_i$-forward measure $Q^{\bar t_i}$ defined by its Radon-Nikodym density
  \begin{equation}
  \frac{dQ^{\bar t_i}}{dQ}= \frac{1}{B(0,\bar t_i)B(\bar t_i)}. \label{second forward measure}
  \end{equation}
  Denoting the expectation with respect to $Q^{\bar t_i}$ by $E^{\bar t_i}$ the quantity in equation \eqref{DB1} is 
  \begin{eqnarray}
  %\eqref{DB1} &=& 
  &&G(\bar t_i)B(0,\bar t_i)\bigg( E^{\bar t_i}\Big[ e^{-\int_0^{t_{j+1}} \lambda^s(u) du}\Big] + E^{\bar t_i}\Big[ e^{-\int_0^{t_{j+1}} \lambda^s(u) du}\Big(\frac{IS_{\bar t_i}}{G(\bar t_i)}-1\Big)^+ \Big]\bigg)\notag\\
  &&= \ G(\bar t_i)B(0,\bar t_i)\big( A_{j,i}^1 + A_{j,i}^2\big), \label{A1A2'}
  \end{eqnarray}  with an obvious notation in the last line. We note that
  \begin{equation}
 e^{C(t_{j+1} -t_1)} A_{j,i}^1= E^{\bar t_i}\bigg[f(D(t_1), \ldots, D(t_{j}))\bigg], \label{Aji1}
 \end{equation}
with
  $f(x_1, \ldots, x_{j})
              =\prod_{l=2}^{j+1}  e^{-\beta \Delta t_{l} x_{l-1}^2}$, and $\Delta t_l =t_l -t_{l-1}.$ 
As in \eqref{Price-Fourier} we obtain that the expectation in \eqref{Aji1} can be represented as 
  \begin{equation}
    %\eqref{Aji1} = 
    \frac{1}{(2\pi)^{j}}\int_{\mathbb{R}^{j}}\tilde M^{j}_i(\i u)\hat{f}(-u)du, \label{DB fourier1}
  \end{equation}
  with
  \begin{equation*}
    \hat{f}(u_1, \ldots, u_{j})=  \prod_{l=2}^{j+1} \sqrt{\frac{\pi}{\beta \Delta t_{l}}}e^{-u_{l-1}^2/(4 \beta \Delta t_{l})},
  \end{equation*}  
  and $\tilde M^{j}_i(\i u)$  defined as 
    \begin{align*}
       \tilde M^{j}_i(\i u) &= E^{\bar t_i}\bigg[e^{\i u_1 D(t_1) + \ldots + \i u_{j}D(t_{j})}\bigg].
    \end{align*}
By the representation of the bank account in \eqref{Bank ac},  the Radon-Nikodym density \eqref{second forward measure} can be written as 
  \begin{equation} 
    \frac{dQ^{\bar t_i}}{dQ}= \exp\Big( -\int_0^{\bar t_i} A(s,\bar t_i)ds + \int_0^{\bar t_i} \Sigma(s,\bar t_i)dL^1_s \Big). \label{second forward measure'}
  \end{equation}
Consequently, from the representation of $D(t)$ in \eqref{D:appendix}, it follows
\begin{eqnarray*}
\tilde M^{j}_i(\i u) &=& E^{\bar t_i}\bigg[e^{\i u_1 D(t_1) + \ldots + \i u_{j}D(t_{j})}\bigg]\\
&=& E_Q\bigg[e^{\i u_1 D(t_1) + \ldots + \i u_{j}D(t_{j})-\int_0^{\bar t_i} A(s,\bar t_i)ds + \int_0^{\bar t_i} \Sigma(s,\bar t_i)dL^1_s}\bigg]\\
&=& \exp\Big( \i\sum_{l=1}^{j} u_{l}\big(-p(t_{l}) - \delta T+ \int_0^T f(0,s)ds + \int_0^{t_{l}}A(s,T)ds -\omega(t_{l})\big)-\int_0^{\bar t_i} A(s,\bar t_i)ds   \Big)\\
&& \times E_{Q}\bigg[
       \exp\bigg( \i\sum_{l=1}^{j} \Big(\int_0^{t_{l}} u_{l}\sigma_2(s)dL^2_s + \int_0^{t_{l}}u_{l}(\beta(s) - \Sigma(s,T))dL^1_s\Big) + \int_0^{\bar t_i} \Sigma(s,\bar t_i)dL^1_s \bigg)   \bigg]. 
\end{eqnarray*}
This last expectation is finite in virtue of \eqref{AssM}, and returns
  \begin{align*}
      \lefteqn{ E_{Q}\Big[\exp\Big(  \int_0^{\bar t_{i}} E_{j,i}(s,u,T) dL^1_s + \int_0^{\bar t_{i}} F_{j}(s,u) dL^2_s \Big) \Big] } \qquad \qquad \\ 
      &= \exp\Big(  \int_0^{\bar t_{i}} \Big( \theta^1_s( E_{j,i}(s,u,T) )+ \theta^2_s ( F_{j}(s,u)) \Big) ds  \Big), 
  \end{align*}
  in virtue of the definitions in \eqref{DB notation} and equation \eqref{cumulant}.
  Therefore,  with $D^{j,i}(u,T)$ defined in \eqref{DB notation} we have
  \begin{equation*}
    \tilde M^{j}_i(\i u) = D^{j,i}(u,T)  \exp\Big(  \int_0^{\bar t_{i}} \Big( \theta^1_s( E_{j,i}(s,u,T) )+ \theta^2_s ( F_{j}(s,u)) \Big) ds  \Big). 
  \end{equation*}
Finally, combining \eqref{Aji1} with \eqref{DB fourier1} and the definition of $M^{j,i}(u,T)$ in \eqref{DB notation}, we deduce that 
\[
A_{j,i}^1= \frac{e^{-C(t_{j+1} -t_1)}}{(2\pi)^{j}}\int_{\mathbb{R}^{j}}M^{j,i}(u,T) du.
\]

For the purpose of the computation of $A_{j,i}^2$, we replace $D(t)$ defined in \eqref{Dt} with the following quantity
  \begin{equation}
  D_{t,t'}= Y_{t} -p(t) + \int_{t}^{t'}f(t,s)ds - \delta t', \quad 0 \leq  t \leq t', \label{bis:Dt}
  \end{equation}
so that 
  \[
  \Big(\frac{IS_{\bar t_i}}{G(\bar t_i)}-1\Big)^+ = \Big(\exp\big(D_{\bar t_i, \bar t_i} +p(\bar t_i)\big) -1\Big)^+.
  \]
We note that  
 \begin{equation}
 e^{C(t_{j+1} -t_1)}A_{j,i}^2= E^{\bar t_i}\big[h\big(D(t_1), \ldots, D(t_{j}), D_{\bar t_i, \bar t_i}\big)\big], \label{Aji2}
\end{equation}
for $h(x_1, \ldots, x_{j+1}):=f(x_1,\dots,x_{j})  (e^{x_{j+1} +p(\bar t_i) }-1)^+ ,$  with $f$ given in \eqref{Aji1}.
In order to ensure integrability,  let us define $H(x_1,\ldots, x_{j+1}):=h(x_1, \ldots, x_{j+1}) e^{-rx_{j+1}}$, for some $1<r<2$, and 
$$ H_{j+1} (x_{j+1}) :=  (e^{x_{j+1} +p(\bar t_i) }-1)^+ e^{-r x_{j+1}}. $$
Then, $H_{j+1} \in L^1(\mathbb{R})$, and $H\in L^1(\mathbb{R}^{j+1})$.
Moreover, elementary integration shows that
for all $ y \in \mathbb{R}$
$$ \hat{H}_{j+1}(y)= \frac{\exp\big(-p(\bar t_i)(\i y-r)\big)}{(\i y -r+1)(\i y-r)}. $$ 
Observe that $|\hat{H}_{j+1}(y)|_{\mathbb{C}}= e^{rp(\bar t_i)}(((1-r)^2+ y^2)(r^2+y^2))^{-1/2}$, thus,
 $\hat{H}_{j+1}\in L^1(\mathbb{R})$. Therefore, combining the last result with \eqref{eq:fhat}, we deduce that $\hat{H}\in L^1(\mathbb{R}^{j+1})$, and
 \begin{equation}
   \hat{H}(y_1, \ldots, y_{j+1})= \frac{\exp\big(-p(\bar t_i)(\i y_{j+1} -r)\big)}{(\i y_{j+1} -r+1)(\i y_{j+1}-r)} \prod_{l=2}^{j+1} \sqrt{\frac{\pi}{\beta \Delta t_l}}e^{-y_{l-1}^2/(4 \beta \Delta t_l)}. \label{DB fourier2}
 \end{equation}
 As $H, \hat{H}\in L^1(\mathbb{R}^{j+1})$, it follows from Theorem 3.2 in \cite{EP} that
 \begin{equation}
E^{\bar t_i}\big[h\big(D(t_1), \ldots, D(t_{j}), D_{\bar t_i, \bar t_i}\big)\big] = \frac{1}{(2\pi)^{j+1}}\int_{\mathbb{R}^{j+1}}\tilde N_i^{j+1}(R+\i u)\hat{h}(\i R-u)du, \label{DB fourier3}
\end{equation}
  for $R=(0, \ldots, 0, r)\in\mathbb{R}^{j+1}$, $1<r<2$, and  $\tilde N_i^{j+1}(R+\i u)$ defined as 
  \begin{equation}
   \tilde N_i^{j+1}(R+\i u):= E^{\bar t_i}\big[e^{\i u_1 D(t_1) + \ldots + \i u_{j}D(t_{j})+ (\i u_{j+1} +r)D_{\bar t_i, \bar t_i}}\big]. \label{DB tildeN}
  \end{equation}
  Using \eqref{bis:Dt} and \eqref{interest}, we get
  \begin{eqnarray*}
  D_{\bar t_i, \bar t_i}&=& -p(\bar t_i) - \delta \bar t_i + Y_{\bar t_i}\\
  &=& -p(\bar t_i) - \delta \bar t_i +\int_0^{\bar t_i} r(s)ds +\int_0^{\bar t_i} \sigma_2(s)dL^2_s + \int_0^{\bar t_i}\beta(s)dL^1_s -\omega(\bar t_i)\\
  &=& -p(\bar t_i) - \delta \bar t_i + \int_0^{\bar t_i} f(0,s)ds + \int_0^{\bar t_i}A(s,\bar t_i)ds - \int_0^{\bar t_i}\Sigma(s,\bar t_i)dL^1_s\\
  && +\int_0^{\bar t_i} \sigma_2(s)dL^2_s + \int_0^{\bar t_i}\beta(s)dL^1_s -\omega(\bar t_i)\\
  &=& -p(\bar t_i) - \delta \bar t_i -\omega(\bar t_i)+ \int_0^{\bar t_i} f(0,s)ds + \int_0^{\bar t_i}A(s,\bar t_i)ds + \int_0^{\bar t_i} \sigma_2(s)dL^2_s\\
  && + \int_0^{\bar t_i}\big(\beta(s)-\Sigma(s,\bar t_i)\big)dL^1_s.
  \end{eqnarray*}
  Plugging the last quantity in \eqref{DB tildeN}, in virtue of  \eqref{D:appendix} and \eqref{second forward measure'}, we deduce that 
  \begin{eqnarray*}
  &&\tilde N_i^{j+1}(R+\i u)\\&&= E_Q\bigg[e^{\i u_1 D(t_1) + \ldots + \i u_{j}D(t_{j}) +(\i u_{j+1} +r)D_{\bar t_i, \bar t_i}-\int_0^{\bar t_i} A(s,\bar t_i)ds + \int_0^{\bar t_i} \Sigma(s,\bar t_i)dL^1_s}\bigg]\\
&&= \exp\Big( \i\sum_{l=1}^{j} u_{l}\big(-p(t_{l}) - \delta T+ \int_0^T f(0,s)ds + \int_0^{t_{l}}A(s,T)ds -\omega(t_{l})\big)-\int_0^{\bar t_i} A(s,\bar t_i)ds   \Big)\\
&&\quad\times \exp\Big((\i u_{j+1} +r) \big(-p(\bar t_i) - \delta \bar t_i -\omega(\bar t_i)+ \int_0^{\bar t_i} f(0,s)ds + \int_0^{\bar t_i}A(s,\bar t_i)ds\big) \Big)\\
&&\quad  \times E_{Q}\bigg[
       \exp\bigg( \i\sum_{l=1}^{j} \Big(\int_0^{t_{l}} u_{l}\sigma_2(s)dL^2_s + \int_0^{t_{l}}u_{l}(\beta(s) - \Sigma(s,T))dL^1_s\Big)\\ &&\qquad +\;\i \Big(\int_0^{\bar t_{i}} u_{j+1}\sigma_2(s)dL^2_s 
        + \int_0^{\bar t_{i}}u_{j+1}(\beta(s) - \Sigma(s,\bar t_i))dL^1_s \Big)\\ 
        &&\qquad +  \int_0^{\bar t_{i}} r\sigma_2(s)dL^2_s + \int_0^{\bar t_{i}} r(\beta(s) - \Sigma(s,\bar t_i))dL^1_s 
        +\int_0^{\bar t_i} \Sigma(s,\bar t_i)dL^1_s \bigg)   \bigg].
  \end{eqnarray*}
Using the definitions from \eqref{DB notation}, the above can be rewritten as
\begin{align*}
\tilde N^{j+1}_i(R+\i u)
  &= \tilde D^{j,i}(u -\i R ,T)
      \; E_Q\bigg[\exp\Big( \int_0^{\bar t_i} \tilde E_{j,i}(s,u-\i R,T) dL^1_s + \int_0^{\bar t_i} \tilde F_j(s,u-\i R) dL^2_s \Big)\bigg].
  \end{align*}
 Observe that, due to $1<r <2$, as well as \eqref{AssM} and \eqref{assM12},  $r \sigma_2(s)\leq M_2$ and $|r\beta(s) +(1-r)\Sigma(s,T)| \leq (2r-1)\frac{M_1}{3} \leq M_1$. Thus, the above expectation exists.
 Using the independence of $L^1$ and $L^2$ and \eqref{cumulant}, we obtain that\small
 \begin{eqnarray}
 \tilde N_i^{j+1}(R+\i u)
  &=& \tilde D^{j,i}(u-\i R  ,T) \notag\\
  &&\times \exp\Big( \int_0^{\bar t_i} \theta^1_s(\tilde E_{j,i}(s,u-\i R,T)) ds + \int_0^{\bar t_i} \theta^2_s(\tilde{F_j}(s,u-\i R))    ds \Big). \label{DB N'}
 \end{eqnarray}\normalsize
 On the other hand, we observe that for any $u\in \mathbb{R}^{j+1}$, 
 $$\hat{H}(u)=\int_{\mathbb{R}^{j+1}}e^{\i \langle u, x \rangle } e^{- \langle R , x \rangle} h(x)dx = \hat{h}(u+\i R). $$ Consequently, we deduce that
 \begin{equation}
 \hat{h}(\i R - u) =\hat{H}(-u) 
 = \frac{\exp\big(p(\bar t_i)(\i u_{j+1} +r)\big)}{(\i u_{j+1} +r-1)(\i u_{j+1}+r)} \prod_{l=2}^{j+1} \sqrt{\frac{\pi}{\beta \Delta t_l}}e^{-u_{l-1}^2/(4 \beta \Delta t_l)}. \label{DB fourier4}
 \end{equation}
Plugging \eqref{DB N'} and \eqref{DB fourier4} in \eqref{DB fourier3}, and using the definition of $ N^{j,i}(u,T)$ in \eqref{DB notation}, it follows that 
\[
A_{j,i}^2= \frac{e^{-C(t_{j+1} -t_1)}}{(2\pi)^{j+1}}\int_{\mathbb{R}^{j+1}}N^{j,i}(u,T) du.
\]

We now consider the case $\bar t_i \le t_1$ for $i\in \{1, \ldots, N\}$. Using the same arguments as above, we derive
\begin{eqnarray*}
&& E_{Q}\bigg[e^{-\int_0^{\bar{t}_i} r(u)du} \ind{ \tau^s \geq \bar{t}_i}\max(I S(\bar{t}_i), G(\bar{t}_i))\bigg]\\
&&= E_{Q}\bigg[e^{-\int_0^{\bar{t}_i} r(u)du} \max(I S(\bar{t}_i), G(\bar{t}_i))\bigg]\\
&& = B(0, \bar t_i) E^{\bar t_i}\bigg[\max(I S(\bar{t}_i), G(\bar{t}_i))\bigg]\\
&&= B(0, \bar t_i)G(\bar t_i) E^{\bar t_i}\bigg[(1+ \Big(\frac{IS_{\bar t_i}}{G(\bar t_i)}-1\Big)^+\Big)\bigg]\\
&&= B(0, \bar t_i)G(\bar t_i) + B(0, \bar t_i)G(\bar t_i) E^{\bar t_i}\bigg[\Big(\exp\big(D_{\bar t_i, \bar t_i} +p(\bar t_i)\big) -1\Big)^+\bigg]\\
&& = B(0, \bar t_i)G(\bar t_i) + B(0, \bar t_i)G(\bar t_i) E^{\bar t_i}\big[h_1(D_{\bar t_i, \bar t_i})\big],
\end{eqnarray*}
where $h_1(x):=(e^{x +p(\bar t_i) }-1)^+$. For some $1<r<2$, we define the function $H_1$ as 
 $H_1 (x) :=  (e^{x +p(\bar t_i) }-1)^+ e^{-r x}.$ By Theorem 3.2 in \cite{EP}, we get  
 \begin{equation}
E^{\bar t_i}\big[h_1\big(D_{\bar t_i, \bar t_i}\big)\big] = \frac{1}{2\pi}\int_{\mathbb{R}} \tilde N_i(r+\i u)\hat{h}_1(\i r-u)du, \label{DB fourier5}
\end{equation}
with 
\[
\tilde N_i(r+\i u):=E^{\bar t_i}\big[e^{(r+ \i u )D_{\bar t_i, \bar t_i}}\big].
\]
Using the same arguments as above, and the definitions in \eqref{DB notation}, we deduce that
\begin{eqnarray}
&&\tilde N_i(r+\i u)\notag\\
&&= \exp\Big((r +\i u)w_{\bar t_i} -\int_0^{\bar t_i} A(s,\bar t_i)ds \Big)\; E_Q\Big[\exp\Big(\int_0^{\bar t_i}E_1(s,u)dL^1_s + \int_0^{\bar t_i}F_1(s,u)dL^2_s\Big)\Big]\notag\\
&&= \exp\Big((r +\i u)w_{\bar t_i} -\int_0^{\bar t_i} A(s,\bar t_i)ds \Big)\;\exp\Big( \int_0^{\bar t_i}\theta^1_s(E_1(s,u))ds + \int_0^{\bar t_i}\theta^2_s(F_1(s,u))ds\Big).\notag\\ \label{DB Ni}
\end{eqnarray}
On the other hand, we have
\begin{equation}
 \hat{h}_1(\i r - u) =\hat{H}_1(-u) 
 = \frac{\exp\big(p(\bar t_i)(\i u +r)\big)}{(\i u +r-1)(\i u+r)}. \label{DB fourier6}
 \end{equation}
 Plugging \eqref{DB Ni} and \eqref{DB fourier6} in \eqref{DB fourier5},  we get
 \[
 E^{\bar t_i}\big[h_1\big(D_{\bar t_i, \bar t_i}\big)\big]=\frac{1}{2\pi}e^{-\int_0^{\bar t_i} A(s,\bar t_i)ds}\int_{\mathbb{R}} N^i(u)du.
 \]
 Therefore, 
\begin{equation*}
     E_{Q}\bigg[e^{-\int_0^{\bar{t}_i} r(u)du} \ind{ \tau^s \geq \bar{t}_i}\max(I S(\bar{t}_i), G(\bar{t}_i))\bigg] =B(0, \bar t_i)G(\bar t_i) + B(0, \bar t_i)G(\bar t_i)A_{0,i}.
\end{equation*} 

Finally, we mention that for any $1\leq i \leq N$, we can compute $Q( \tau^m(x)> \bar{t}_{i} )$ (and consequently $Q( \tau^m(x) \in [\bar{t}_{i-1}, \bar{t}_i))$) in virtue of \eqref{Q^m}.

 \end{proof}
 
 \section{Further numerical results: benchmarking}\label{App:bench}
 In this Appendix, we provide full results from the benchmarking exercise of the Monte Carlo integration pricing procedure. 
 
 From section \ref{sec:numerics}, we recall that we consider contracts with short maturity, and surrender frequency $\Delta t_l=1$ year. In addition, we assume half-annually spaced $\bar{t}_i$ for the mortality monitoring. 
 
 Consequently, a sensible benchmarking exercise for the value of both the DB and the SB with deterministic quadrature methods can be achieved by considering the 4 year maturity contract, i.e. $K=3$. 
 
 Starting with the value of the DB,  the term $A_{0,i}$ in Theorem \ref{PDB} is a 1-dimensional integral for $i=1,\dots,N$, which can be obtained by direct quadrature, and therefore is not considered by this benchmarking exercise. 
 
 Monte Carlo integration is instead used to compute the remaining terms appearing in Theorem \ref{PDB}, i.e. $A_{1,i}^{1}$, $A_{1,i}^{2}$, $A_{2,i}^{1}$, which are two-dimensional integrals, and $A_{2,i}^{2}$ which is a three-dimensional integral, for all $i=1,\dots,N$. These are benchmarked against the values obtained by deterministic quadrature procedures in Matlab.

For importance sampling, we choose the same values of the variance of the importance sampling distribution as in the GMAB case, i.e. 0.25 for the first $K-1$ dimensions and 1 for the final $K^{th}$ dimension. The quality of the estimate is confirmed by the negligible bias and standard errors reported in Table \ref{tab:GMABbench_f}.

Concerning the value of the SB, we notice that the first term in the sum defining $P^{SB}$ in Theorem \ref{surrender} is composed by a constant ($B_1^1$) and a one-dimensional integral ($B_1^2$), which is obtained by deterministic  quadrature; similarly to the previous case this term is not considered in this analysis. The second term in this sum is instead formed by a one- and a two-dimensional integral ($B_2^1$ and $B_2^2$ respectively) for which we deploy Monte Carlo integration. Benchmarking is performed against the corresponding quadrature routines in Matlab.

For importance sampling, given the relatively simple forms of the integrand functions, we use the same variance fixed at 0.16 across all the $K$ dimensions. The goodness of the estimate is confirmed by the negligible bias and standard errors shown in Table \ref{tab:GMABbench_f}.
 
\begin{table}[h]
  \centering
  \caption{\small Benchmarking Monte Carlo integration with importance sampling. Parameters: Table \ref{tab:NIGParams}. `Quadrature':  Matlab built-in functions {\fontfamily{qcr}\selectfont integral}, {\fontfamily{qcr}\selectfont integral2} and {\fontfamily{qcr}\selectfont integral3}. Bias/standard error expressed as percentage of the actual value. Monte Carlo iterations: 100 batches of size $10^6$. CPU time expressed in seconds and referred to the average time of 1 batch of $10^6$ iterations.}
  \resizebox {0.75\textwidth }{!}{
    \begin{tabular}{lclccccr}\toprule
      \multicolumn{1}{l}{GMAB} &       &       & Quadrature & \multicolumn{4}{c}{Monte Carlo integration (Imp. Sampling)} \\
      \multicolumn{1}{c}{$T$} & $K$     &       & Value & Value & Bias (\%) & Std. Error (\%) &  \\\midrule
      \multicolumn{1}{l}{3 years} & 2     & $A_1$  & 0.9867 & 0.9867 & 0.0035 & 0.0050 &  \\
      &       & $A_2$  & 0.1487 & 0.1482 & 0.3584 & 0.2683 &  \\
      &       & CPU   & 6.6323 & 31.2944 & &    &  \\
      &       &       & \multicolumn{1}{l}{($A_1$: 0.1280)} &       &       &       &  \\
      &       &       & \multicolumn{1}{l}{($A_2$: 6.5043)} &       &       &       &  \\
      &       &       &       &       &       &       &  \\
      \multicolumn{1}{l}{4 years} & 3     & $A_1$  & 0.9703 & 0.9702 & 0.0139 & 0.0076 &  \\
      &       & $A_2$  & 0.1669 & 0.1669 & 0.0155 & 0.0647 &  \\
      &       & CPU   & 589.0926 & 51.6269 & &  &  \\
      &       &       & \multicolumn{1}{l}{($A_1$: 1.3042)} &       &       &       &  \\
      &       &       & \multicolumn{1}{l}{($A_2$: 587.7884)} &       &       &       &  \\\midrule
      &       &       &       &       &       &       &  \\
      \multicolumn{1}{l}{DB} &       &       & \multicolumn{1}{c}{Quadrature} & \multicolumn{4}{c}{Monte Carlo integration (Imp. Sampling)} \\
      \multicolumn{1}{c}{$T$} & $K$     &       & \multicolumn{1}{c}{Value} & \multicolumn{1}{c}{Value} & \multicolumn{1}{c}{Bias (\%)} & \multicolumn{1}{c}{Std. Error (\%)} &  \\\midrule
      \multicolumn{1}{l}{4 years} & 3     & \multicolumn{1}{l}{$A_{1,1}^1$} & \multicolumn{1}{c}{0.9866} & \multicolumn{1}{c}{0.9865} & \multicolumn{1}{c}{0.0047} & \multicolumn{1}{c}{0.0051} &  \\
      & $(j,i) = (1,1)$ & \multicolumn{1}{l}{$A_{1,1}^2$} & \multicolumn{1}{c}{0.1122} & \multicolumn{1}{c}{0.1114} & \multicolumn{1}{c}{0.6452} & \multicolumn{1}{c}{2.0849} &  \\
      &       & \multicolumn{1}{l}{CPU} & \multicolumn{1}{c}{9.0793} & 30.3230 & &         &  \\
      &       &       & \multicolumn{1}{l}{($A_{1,1}^1$: 0.4369)} &       &       &       &  \\
      &       &       & \multicolumn{1}{l}{($A_{1,1}^2$: 8.6424)} &       &       &       &  \\
      &       &       &       &       &       &       &  \\
      &       & \multicolumn{1}{l}{$A_{1,2}^1$} & \multicolumn{1}{c}{0.9866} & \multicolumn{1}{c}{0.9866} & \multicolumn{1}{c}{0.0040} & \multicolumn{1}{c}{0.0047} &  \\
      & $(j,i) = (1,2)$ & \multicolumn{1}{l}{$A_{1,2}^2$} & \multicolumn{1}{c}{0.1239} & \multicolumn{1}{c}{0.1247} & \multicolumn{1}{c}{0.6851} & \multicolumn{1}{c}{0.3349} &  \\
      &       & \multicolumn{1}{l}{CPU} & \multicolumn{1}{c}{9.0793} & 30.2637 &  &         &  \\
      &       &       & \multicolumn{1}{l}{($A_{1,2}^1$: 0.4369)} &       &       &       &  \\
      &       &       & \multicolumn{1}{l}{($A_{1,2}^2$: 8.6424)} &       &       &       &  \\
      &       &       &       &       &       &       &  \\
      &       & \multicolumn{1}{l}{$A_{2,1}^1$} & \multicolumn{1}{c}{0.9703} & \multicolumn{1}{c}{0.9704} & \multicolumn{1}{c}{0.0100} & \multicolumn{1}{c}{0.0074} &  \\
      &$(j,i) = (2,1)$ & \multicolumn{1}{l}{$A_{2,1}^2$} & \multicolumn{1}{c}{0.1349} & \multicolumn{1}{c}{0.1353} & \multicolumn{1}{c}{0.2873} & \multicolumn{1}{c}{0.2220} &  \\
      &       & \multicolumn{1}{l}{CPU} & \multicolumn{1}{c}{834.0646} & 52.1560 & &        &  \\
      &       &       & \multicolumn{1}{l}{($A_{2,1}^1$: 1.1743)} &       &       &       &  \\
      &       &       & \multicolumn{1}{l}{($A_{2,1}^2$: 832.8903)} &       &       &       &  \\
      &       &       &       &       &       &       &  \\
      &       & \multicolumn{1}{l}{$A_{2,2}^1$} & \multicolumn{1}{c}{0.9703} & \multicolumn{1}{c}{0.9705} & \multicolumn{1}{c}{0.0159} & \multicolumn{1}{c}{0.0079} &  \\
      & $(j,i) = (2,2)$ & \multicolumn{1}{l}{$A_{2,2}^2$} & \multicolumn{1}{c}{0.1464} & \multicolumn{1}{c}{0.1461} & \multicolumn{1}{c}{0.2016} & \multicolumn{1}{c}{0.1338} &  \\
      &       & \multicolumn{1}{l}{CPU} & \multicolumn{1}{c}{834.0646} & 52.0597 &  &        &  \\
      &       &       & \multicolumn{1}{l}{($A_{2,2}^1$: 1.1743)} &       &       &       &  \\
      &       &       & \multicolumn{1}{l}{($A_{2,2}^2$: 832.8903)} &       &       &       &  \\
      &       &       &       &       &       &       &  \\
      &       & \multicolumn{1}{l}{$A_{2,3}^1$} & \multicolumn{1}{c}{0.9703} & \multicolumn{1}{c}{0.9703} & \multicolumn{1}{c}{0.0080} & \multicolumn{1}{c}{0.0069} &  \\
      & $(j,i) = (2,3)$ & \multicolumn{1}{l}{$A_{2,3}^2$} & \multicolumn{1}{c}{0.1570} & \multicolumn{1}{c}{0.1569} & \multicolumn{1}{c}{0.0378} & \multicolumn{1}{c}{0.0685} &  \\
      &       & \multicolumn{1}{l}{CPU} & \multicolumn{1}{c}{834.0646} & 52.9188 & &        &  \\
      &       &       & \multicolumn{1}{l}{($A_{2,3}^1$: 1.1743)} &       &       &       &  \\
      &       &       & \multicolumn{1}{l}{($A_{2,3}^2$: 832.8903)} &       &       &       &  \\
      &       &       &       &       &       &       &  \\
      &       & \multicolumn{1}{l}{$A_{2,4}^1$} & \multicolumn{1}{c}{0.9703} & \multicolumn{1}{c}{0.9703} & \multicolumn{1}{c}{0.0012} & \multicolumn{1}{c}{0.0075} &  \\
      & $(j,i) = (2,4)$ & \multicolumn{1}{l}{$A_{2,4}^2$} & \multicolumn{1}{c}{0.1670} & \multicolumn{1}{c}{0.1669} & \multicolumn{1}{c}{0.0275} & \multicolumn{1}{c}{0.0588} &  \\
      &       & \multicolumn{1}{l}{CPU} & \multicolumn{1}{c}{834.0646} & 52.9617 & &      &  \\
      &       &       & \multicolumn{1}{l}{($A_{2,4}^1$: 1.1743)} &       &       &       &  \\
      &       &       & \multicolumn{1}{l}{($A_{2,4}^2$: 832.8903)} &       &       &       &  \\\midrule
      &       &       &       &       &       &       &  \\
      \multicolumn{1}{l}{SB} &       &       & \multicolumn{1}{c}{Quadrature} & \multicolumn{4}{c}{Monte Carlo integration (Imp. Sampling)} \\
      \multicolumn{1}{c}{$T$} & $K$     &       & \multicolumn{1}{c}{Value} & \multicolumn{1}{c}{Value} & \multicolumn{1}{c}{Bias (\%)} & \multicolumn{1}{c}{Std. Error (\%)} &  \\\midrule
      \multicolumn{1}{l}{4 years} & 3     & \multicolumn{1}{l}{$B_2^1$} & \multicolumn{1}{c}{0.9871} & \multicolumn{1}{c}{0.9871} & \multicolumn{1}{c}{0.0030} & \multicolumn{1}{c}{0.0029} &  \\
      & $(i = 2)$ & \multicolumn{1}{l}{$B_2^2$} & \multicolumn{1}{c}{0.9717} & \multicolumn{1}{c}{0.9717} & \multicolumn{1}{c}{0.0066} & \multicolumn{1}{c}{0.0041} &  \\
      &       & \multicolumn{1}{l}{CPU} & 1.1314 & 31.7341 & &     &  \\
      &       &       & \multicolumn{1}{l}{($B_2^1$:  0.1280)} &       &       &       &  \\
      &       &       & \multicolumn{1}{l}{($B_2^2$: 1.0034)} &       &       &       &  \\
      \bottomrule
  \end{tabular}}
  \label{tab:GMABbench_f}
\end{table}
 
 \end{appendix}

\clearpage

\end{document}